\newtheoremstyle{mythm}{3pt}{3pt}{}{16pt}{\bfseries}{:}{.5em}{}
\theoremstyle{mythm}
\newtheorem{theorem}{Theorem}
\newtheorem{example}{Example}
\newtheorem{definition}{Definition}
\newtheorem{remark}{Remark}
\newtheorem{lemma}{Lemma}
\begin{document}
\title{Coded Caching Schemes with Low Rate and Subpacketizations
\author{Minquan Cheng, Qifa Yan, Xiaohu Tang, \IEEEmembership{Member,~IEEE}, Jing Jiang
}
\thanks{M. Cheng and J. Jiang are with Guangxi Key Lab of Multi-source Information Mining $\&$ Security, Guangxi Normal University,
Guilin 541004, China (e-mail: $\{$chengqinshi,jjiang2008$\}$@hotmail.com).}
\thanks{Q. Yan and X. Tang are with the Information Security and National Computing Grid Laboratory,
Southwest Jiaotong University, Chengdu, 610031, China (e-mail: qifa@my.swjtu.edu.cn, xhutang@swjtu.edu.cn).}
}
\date{}
\maketitle

\begin{abstract}
Coded caching scheme, which is an effective technique to increase the transmission efficiency during peak traffic times, has recently become quite popular among the coding community. Generally rate can be measured to the transmission in the peak traffic times, i.e., this efficiency increases with the decreasing of rate. In order to implement a coded caching scheme, each file in the library must be split in a certain number of packets. And this number directly reflects the complexity of a coded caching scheme, i.e., the complexity increases with the increasing of the packet number. However there exists a tradeoff between the rate and packet number. So it is meaningful to characterize this tradeoff and design the related Pareto-optimal coded caching schemes with respect to both parameters.

Recently, a new concept called placement delivery array (PDA) was proposed to characterize the coded caching scheme. However as far as we know no one has yet proved that one of the previously known PDAs is Pareto-optimal. In this paper, we first derive two lower bounds on the rate under the framework of PDA. Consequently, the PDA proposed by Maddah-Ali and Niesen is Pareto-optimal,
and a tradeoff between rate and packet number is obtained for some parameters.
Then, from the above observations and the view point of combinatorial design, two new classes of Pareto-optimal PDAs are obtained. Based on these PDAs, the schemes with low rate and packet number are obtained. Finally the performance of some previously known PDAs are estimated by comparing with these two classes of schemes.
\end{abstract}

\begin{IEEEkeywords}
Coded caching scheme, rate, packet number, placement delivery array, lower bound, Pareto-optimal.
\end{IEEEkeywords}
\section{Introduction}
Recently, as the wireless data traffic is increasing at an incredible rate dominated by the video streaming, the wireless network has been imposed a tremendous pressure on the data transmission \cite{White}. Consequently the communication systems are always congested during the peak-traffic times. So reducing this congestion is very meaningful in wireless network and now is a hot topic in industrial and academic fields. Caching system, which proactively caches some contents at the network edge during off-peak times, is a promising solution to reduce congestion 
 (see \cite{AA,GMDC,GGMG,JTLC,AN}, and references therein).

An important caching system, centralized caching system, is widely studied. In this system, the following scenario are always focused: a single server containing $N$ files with the same length connects to $K$ users over a shared link and each user has a cache memory of size $M$ files (see Fig. \ref{system}).
 \begin{figure}[htbp]
\centering\includegraphics[width=0.4\textwidth]{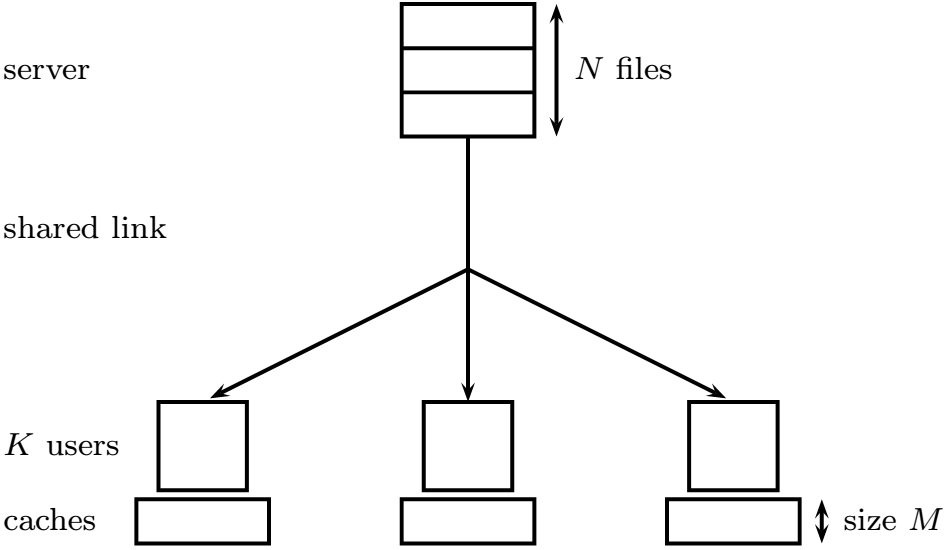}
\caption{Coded caching system}\label{system}
\end{figure}
We call it $(K,M,N)$ caching system in this paper. In the caching system, a caching scheme consists of two phases: a placement phase during off-peak traffic times and a delivery phase during peak traffic times. In the placement phase, the server proactively sends some contents in each user's cache. In the delivery phase, each user requests for one file from server. And after receiving the requests, server transmits some data to users such that each user's demand is satisfied with the help of its caches. Since the placement phase is carried out without any knowledge of the user requests, it is interested in studying the caching scheme such that the broadcasted amount, which can be satisfied for all possible user requests, is as small as possible in the delivery phase. This broadcasted amount is called the rate. Clearly {\em the first objective is to minimize the rate $R$} since it represents the efficiency of a caching scheme.

In the seminal work \cite{AN}, a coded caching scheme based on network coding theory was proposed for the centralized caching system. They showed that the rate of a coded caching scheme can be effectively further reduced by jointly designing the content placement phase and the content delivery phase¡£
They also designed the first determined scheme for a $(K,M,N)$ coded caching system. Such a scheme is referred to as MN scheme in this paper. According to an elaborate uncoded placement and a coded delivery, they showed that the $(K,M,N)$ MN scheme is able to reduce the rate $R$ from $K(1-\frac{M}{N})$ of uncoded caching scheme to $K(1-\frac{M}{N})/(1+\frac{KM}{N})$.
When $N<K$, by employing coded multicasting opportunities across users with the same demand, an improved coded caching scheme is presented in \cite{T}. When $K\leq N$, by means of graph theory, reference \cite{WTP} showed that MN scheme has minimum delivery rate under the constraint of uncoded cache placement. So MN scheme has been also extensively employed in practical scenarios, such as decentralized version \cite{MN1}, device to device networks \cite{JCM}, online
caching update \cite{RAN,YUTC} and hierarchical networks \cite{KNMD}, \cite{Xiao2016tree} and so on. There are also many studies following \cite{AN}, for instances, \cite{AG,GR,STC,TR,WLG,YTC} etc.


In order to implement a coded caching scheme, Maddah-Ali and Niesen in \cite{AN} showed that each file must be split into a certain number of packets. We denote such a number by $F$. In MN scheme, each file must be divided into ${K\choose KM/N}$ packets where $KM/N$ is an integer. Clearly
this would become infeasible when $K$ is large. Furthermore, the complexity of a coded caching scheme increases with the parameter $F$. So {\em the second objective is to minimize the packets number $F$}, and plays an important role in the field of coded caching scheme. Aiming at this objective, Shanmugam et al. in \cite{SJTLD} first discussed the problem of subpacketization by grouping method.
Recently Yan et al. in \cite{YCTC} characterized the $(K,M,N)$ coded caching scheme with $F$ subpacketization by a very interesting $F\times K$ array which is called $(K,F,Z,S)$ placement delivery array (PDA), where $M/N=Z/F$ and $R=S/F$. Then they proved that MN scheme is equivalent to a special PDA (MN PDA for short throughout this paper). Furthermore, by increasing little delivery rate, they obtained two infinite classes of PDAs such that $F$ reduces significantly comparing with that of MN PDA. By increasing more rate several PDAs were constructed for further reducing the value of $F$ respectively by means of hypergraphs \cite{SZG}, bipartite graphs \cite{YTCC}, Ruzsa-Szem\'{e}redi graphs \cite{STD}, resolvable design \cite{TR} and so on. From the above introductions, we can see that there is a tradeoff between rate and packet number. Clearly it is very interesting to characterize this tradeoff and construct the related Pareto-optimal coded caching schemes with respect to both parameters.
However as far as we know no one has yet proved that one of the previously known PDAs is Pareto-optimal.

In this paper, we focus on Pareto-optimal PDAs. Firstly two lower bounds on the value of $R$ are derived. Consequently, MN PDA is a pareto optimal, and a tradeoff between $R$ and $F$ is obtained for some parameters. Secondly, unlike the previously know strategies of constructing PDAs, we use a different strategy, i.e., we characterize PDAs by means of a set of $3$ dimensional vectors. From this characterization and the above lower bounds, two new classes of Pareto-optimal PDAs are obtained. Based on these PDAs, the schemes with low rate and packet number are obtained when $K\leq N$. Finally the performance of three previously known schemes are estimated by comparing with these two classes of schemes.

The rest of this paper is organized as follows. Section \ref{preliminaries} briefly reviews the relationship between the coded caching scheme proposed in \cite{AN} and concept of the PDA introduced in \cite{YCTC}. In Section \ref{Lower bounds}, two lower bounds on the value of $R$ are derived. In Section \ref{combinatorial designs}, we characterize a PDA by means of a set of $3$ dimensional vectors and obtain two new classes of PDAs by the MN PDA. In Section \ref{se-other two good PDAs} we show that these two class of PDAs are Pareto-optimal. Finally for some parameters $K$, $M/N$ and $R$, three comparisons with previously known PDAs are proposed. Conclusion is drawn in Section \ref{conclusion}.

\section{Preliminaries}
\label{preliminaries}

In this paper, we denote arrays by bold capital letters, and assume that each entry has exactly one symbol. We use $[a,b]=\{a,a+1,\ldots,b\}$ and $[a, b)=\{a,a+1,\ldots,b-1\}$ for any integers $a$ and $b$ with $a\leq b$.
\subsection{Coded caching schemes}
In a $(K,M,N)$ caching system, we denote the $N$ files by $\mathcal{W}=\{W_0,W_1,\ldots,W_{N-1}\}$ and $K$ users by $\mathcal{K}=\{0,1,\ldots,K-1\}$. A coded caching scheme consists of two separated two phases which are introduce by  Maddah-Ali and Niesen in \cite{AN} as follows:
\begin{itemize}
\item {\bf Placement phase}: Firstly each file is subdivided into $F$ equal packets, \emph{i.e.}, $W_{i}=\{W_{i,j}:j\in[0,F)\}$. Such a coded caching scheme is called $F$-division scheme; Then Each user just caches some part of the packets of all files (This is also called uncoded placement phase), and uses the same caching policy for all files, i.e., each user caches packets with the same indices from all files, where packets belonging to every file is ordered according to a chosen numbering. Denote $\mathcal{Z}_k$ the packets cached by user $k$.
\item {\bf Delivery phase}: Each user randomly requests one file from the files set $\mathcal{W}$ independently. The request is denoted by $\mathbf{d}=(d_0,d_1,\cdots,d_{K-1})$, which indicates that user $k$ requests the $d_k$-th file $W_{d_k}$ for any $d_k\in[0,N)$ and $k\in\mathcal{K}$. Once the server received the users' request $\mathbf{d}$, it broadcasts a signal of at most $RF$ packets to users. Each user is able to recover its requested file from the received signal with the help of the contents in its own cache.
\end{itemize}
By the way the assumptions in placement phase are very useful and have been used in many references. Since the placement phase is carried out without any knowledge of the user requests, we prefer to construct an $F$-division $(K,M,N)$ caching scheme such that the rate $R$ is as small as possible for any request ${\bf d}\in [0,N)^{K}$. By means of graph theory, reference \cite{WTP} showed that the rate of MN scheme is the minimum rate in the uncoded cache placement. However the packet number $F={K\choose KM/N}$ is too large when $K$ is large. In order to study the packet number of a coded caching scheme, an interesting combinatorial structure called placement delivery array, which is introduced in the following subsection, was proposed in \cite{YCTC}.


\subsection{Placement delivery array}
\begin{definition}\rm(\cite{YCTC})
\label{def-PDA}
For positive integers $K$ and $F$, an $F\times K$ array $\mathbf{P}=(p_{i,j})$, $i\in [0,F), j\in[0,K)$, composed of a specific symbol $``*"$ called star and $S$ nonnegative integers
$0,1,\cdots, S-1$, is called a $(K,F,S)$ placement delivery array (PDA) if it satisfies C$1$ in the following conditions:
\begin{enumerate}
  \item [C$1$.] For any two distinct entries $p_{i_1,j_1}$ and $p_{i_2,j_2}$,    $p_{i_1,j_1}=p_{i_2,j_2}=s$ is an integer  only if
  \begin{enumerate}
     \item [a.] $i_1\ne i_2$, $j_1\ne j_2$, i.e., they lie in distinct rows and distinct columns; and
     \item [b.] $p_{i_1,j_2}=p_{i_2,j_1}=*$, i.e., the corresponding $2\times 2$  subarray formed by rows $i_1,i_2$ and columns $j_1,j_2$ must be of the following form
  \begin{eqnarray*}
    \left(\begin{array}{cc}
      s & *\\
      * & s
    \end{array}\right)~\textrm{or}~
    \left(\begin{array}{cc}
      * & s\\
      s & *
    \end{array}\right).
  \end{eqnarray*}
   \end{enumerate}
  \end{enumerate}
\end{definition}
For any positive integer $Z\leq F$, $\mathbf{P}$ is denoted by $(K,F,Z,S)$ PDA if
\begin{enumerate}
\item [C$2$.] each column has exactly $Z$ stars.
   \end{enumerate}
And for any positive integer $g$, a $(K,F,Z,S)$ PDA is said to be $g$-regular, denoted by $g$-$(K,F,Z,S)$ PDA, if
 each integer of $[0,S)$ appears exactly $g$ times.
\begin{example}\rm
\label{E-pda}
It is easy to verify that the following array is a $3$-$(4,6,3,4)$ PDA:
\begin{eqnarray*}
\mathbf{P}_{6\times 4}=\left(\begin{array}{cccc}
*&*&0&1\\
*&0&*&2\\
*&1&2&*\\
0&*&*&3\\
1&*&3&*\\
2&3&*&*
\end{array}\right).
\end{eqnarray*}
\end{example}

Given a $(K,F,Z,S)$ PDA $\mathbf{P}=(p_{j,k})_{j\in [0,F),k\in[0,K)}$, Yan \textit{et al.} in \cite{YCTC} showed that an $F$-division $(K,M,N)$ coded caching scheme with $M/N=Z/F$ and $R=S/F$ can be realized by the following rule:
\begin{itemize}
\item [1.] \textbf{Placement Phase:} 
Each file is divided into $F$ equal packets, i.e., $W_i=(W_{i,0}, W_{i,1}, \ldots, W_{i,F-1})$, $i=0$, $1$, $\ldots$, $N-1$. Then user $ k\in\mathcal{K}$ caches the packets
      \begin{align}
      \mathcal{Z}_k=\{W_{i,j}: p_{j,k}=*,\ i\in[0,N)\}.\label{eq_p1}
      \end{align}
     Clearly each user stores $N\cdot Z$ packets by condition C$2$. So the whole size of cache is $N\frac{Z}{F}=N\frac{M}{N}=M$, which satisfies the users' cache constraint.
\item [2.] \textbf{Delivery Phase:}  Once the server receives the request $\mathbf{d}=(d_0,d_1,\cdots,d_{K-1})$, at the  time slot $s$, $0\le s<S$, it sends
      \begin{align}
      \bigoplus_{p_{j,k}=s,  j\in[0, F),  k\in[0,K)}W_{d_{k},j}\label{eq_p2}
      \end{align}
\end{itemize}
The rate of the scheme is $R=S/F$ for any request ${\bf d}$ since each file is divided into $F$ equal packets and there are $S$ distinct integers in $\mathbf{P}$.

\begin{example}\label{exam2} Given a $(4,6,3,4)$ PDA in Example \ref{E-pda}, from the above analysis one can obtain a $6$-division $(4,3,6)$ coded caching scheme in the following way.
\begin{itemize}
   \item \textbf{Placement Phase}: First the files are denoted by $W_0,W_1,W_2,W_3,W_4,W_5$ respectively, and each file is divided into $F=6$ packets, $i.e.$ $W_i=\{W_{i,0},W_{i,1},W_{i,2},W_{i,3},W_{i,4},W_{i,5}\}$ , $i\in [0,6)$. Then the contents in each user are
       \begin{align*}
       \mathcal{Z}_0=\left\{W_{i,0},W_{i,1},W_{i,2}:i\in[0,6)\right\}\ \ \ \ \ \ \
       \mathcal{Z}_1=\left\{W_{i,0},W_{i,3},W_{i,4}:i\in[0,6)\right\} \\
       \mathcal{Z}_2=\left\{W_{i,1},W_{i,3},W_{i,5}:i\in[0,6)\right\}\ \ \ \ \ \ \
       \mathcal{Z}_3=\left\{W_{i,2},W_{i,4},W_{i,5}:i\in[0,6)\right\}
       \end{align*}
   \item \textbf{Delivery Phase}: Assume that the request vector is $\mathbf{d}=(0,1,2,3,4,5)$. Table \ref{table1} shows the transmitting process.
   \begin{table}[!htp]

  \normalsize{
  \begin{tabular}{|c|c|}
\hline
   Time Slot& Transmitted Signnal  \\
\hline
   $0$&$W_{0,3}\oplus W_{1,1}\oplus W_{2,0}$\\ \hline
   $1$&$W_{0,4}\oplus W_{1,2}\oplus W_{3,0}$\\ \hline
  $2$& $W_{0,5}\oplus W_{2,2}\oplus W_{3,1}$\\ \hline
   $3$& $W_{1,5}\oplus W_{2,4}\oplus W_{3,3}$\\ \hline
  \end{tabular}}\centering
  \caption{Delivery steps in Example \ref{exam2} }\label{table1}
\end{table}
\end{itemize}
\end{example}
\begin{theorem}(\cite{YCTC})\label{thm1}
Given a $(K,F,Z,S)$ PDA $\mathbf{P}=(p_{j,k})_{F\times K}$, one can obtain a corresponding $F$-division caching scheme for any $(K,M,N)$ caching system with $M/N=Z/F$. Precisely, each user is able to decode its requested file correctly for any request $\mathbf{d}$ at the rate
$R=S/F$.
\end{theorem}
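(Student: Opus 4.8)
The plan is to verify that the placement and delivery rules specified in \eqref{eq_p1} and \eqref{eq_p2} indeed constitute a valid $F$-division $(K,M,N)$ scheme, and then to read off the rate. Since the cache-size constraint $M/N=Z/F$ is already settled by condition C$2$ (each column contributes exactly $Z$ cached packets per file, so each user stores $NZ$ packets $=NZ/F$ files), the substantive task is to prove \emph{decodability}: for an arbitrary request $\mathbf{d}\in[0,N)^K$, every user $k$ can reconstruct each packet $W_{d_k,j}$ of its demanded file $W_{d_k}$ that it does not already cache. By \eqref{eq_p1}, user $k$ caches exactly those packets $W_{d_k,j}$ with $p_{j,k}=*$, so the packets it is missing are precisely $\{W_{d_k,j}: p_{j,k}\in[0,S)\}$, one for each integer entry in column $k$.

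First I would fix a user $k$ and a missing packet $W_{d_k,j}$, and let $s=p_{j,k}\in[0,S)$ be the corresponding integer. I would then examine the single broadcast \eqref{eq_p2} transmitted in time slot $s$, namely $\bigoplus_{p_{j',k'}=s}W_{d_{k'},j'}$. The term $W_{d_k,j}$ appears in this XOR (since $p_{j,k}=s$), and I claim it appears exactly once: if $s$ occurred twice in column $k$, say at rows $j$ and $j''$, then C$1$(a) would force those two equal integer entries into distinct columns, a contradiction. Hence user $k$ can isolate $W_{d_k,j}$ provided it caches every \emph{other} summand $W_{d_{k'},j'}$ with $p_{j',k'}=s$ and $(j',k')\neq(j,k)$.

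The crux of the argument --- and the only place where the defining axiom of a PDA is used --- is exactly this caching claim, which follows directly from condition C$1$(b). Since $p_{j,k}$ and $p_{j',k'}$ are two distinct entries both equal to the integer $s$, C$1$ guarantees $j\neq j'$, $k\neq k'$, and in particular $p_{j',k}=*$. By the placement rule \eqref{eq_p1}, a star at row $j'$ of column $k$ means user $k$ caches $W_{i,j'}$ for \emph{every} file index $i$, hence in particular the packet $W_{d_{k'},j'}$. Thus every other summand in the slot-$s$ broadcast lies in $\mathcal{Z}_k$, and XOR-ing them out recovers $W_{d_k,j}$.

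Finally I would let $j$ range over all integer rows of column $k$ and let $k$ range over all users: the argument above recovers each missing packet of each user from the broadcast indexed by its own integer value, so every user reconstructs all $F$ packets of its demanded file and decoding succeeds for arbitrary $\mathbf{d}$. For the rate, the delivery phase sends one coded signal per value $s\in[0,S)$, i.e.\ exactly $S$ signals, and each signal is an XOR of equal-length packets and therefore has the size of a single packet, namely $1/F$ of a file; the total transmitted volume is $S/F$ files, giving $R=S/F$. The genuinely nontrivial step is the decodability claim of the third paragraph, where condition C$1$(b) is precisely what makes the interference terms cancellable; the counting and rate computations are routine.
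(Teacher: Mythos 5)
Your proof is correct and follows essentially the same route as the paper: it uses exactly the placement rule \eqref{eq_p1} and delivery rule \eqref{eq_p2}, verifies the memory constraint via C$2$, and derives decodability from C$1$(a) (the requested packet appears exactly once in its slot) and C$1$(b) (all interfering summands are cached). The paper itself only describes the scheme and defers the decodability verification to the cited reference \cite{YCTC}; your argument supplies precisely that standard verification, so there is nothing to object to.
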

\begin{theorem}({\em MN PDA},\cite{YCTC})\label{th-AN-Y}
MN scheme is equivalent to a $(t+1)$-$(K,{K\choose t},{K-1\choose t-1},{K\choose t+1})$ PDA with $t$ stars in each row where $t=KM/N$.
\end{theorem}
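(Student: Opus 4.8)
The plan is to exhibit the PDA explicitly through a combinatorial indexing of its rows and columns, then to read off the four numerical parameters, verify condition C$1$, and finally match the scheme induced by Theorem \ref{thm1} with the MN scheme cell by cell. First I would identify the $F={K\choose t}$ rows of the array $\mathbf{P}$ with the $t$-subsets $\mathcal{T}\subseteq[0,K)$ and its $K$ columns with the users $k\in[0,K)$, and define the entries by
\begin{eqnarray*}
p_{\mathcal{T},k}=\left\{\begin{array}{ll}
* & \textrm{if } k\in\mathcal{T},\\
\mathcal{T}\cup\{k\} & \textrm{if } k\notin\mathcal{T},
\end{array}\right.
\end{eqnarray*}
where in the second case the $(t+1)$-subset $\mathcal{T}\cup\{k\}$ is read through a fixed bijection between the ${K\choose t+1}$ such subsets and the integer symbols $\{0,1,\ldots,{K\choose t+1}-1\}$. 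This mirrors the MN rule: user $k$ caches the packet indexed by $\mathcal{T}$ precisely when $k\in\mathcal{T}$, while the remaining cells record which coded transmission delivers the missing packet.

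Next I would read off the parameters. There are ${K\choose t}$ rows, so $F={K\choose t}$. A given column $k$ carries a star exactly for the $t$-subsets $\mathcal{T}$ that contain $k$, and there are ${K-1\choose t-1}$ of these, so C$2$ holds with $Z={K-1\choose t-1}$. Each row $\mathcal{T}$ has a star in column $k$ iff $k\in\mathcal{T}$, giving exactly $t$ stars per row. The distinct integers are in bijection with the $(t+1)$-subsets, so $S={K\choose t+1}$; and a fixed $(t+1)$-subset $\mathcal{S}$ labels exactly the cells $(\mathcal{S}\setminus\{k\},k)$ for $k\in\mathcal{S}$, which number $t+1$, establishing $(t+1)$-regularity.

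The main verification, and the step I expect to be the crux, is condition C$1$. Suppose $p_{\mathcal{T}_1,k_1}=p_{\mathcal{T}_2,k_2}$ is the integer labelling some $(t+1)$-subset $\mathcal{S}$. Then $\mathcal{T}_1=\mathcal{S}\setminus\{k_1\}$ and $\mathcal{T}_2=\mathcal{S}\setminus\{k_2\}$ with $k_1,k_2\in\mathcal{S}$. For the two cells to be distinct we must have $k_1\neq k_2$ (otherwise $\mathcal{T}_1=\mathcal{T}_2$ and the cells coincide), which already forces $\mathcal{T}_1\neq\mathcal{T}_2$ and so settles part a. For part b, since $k_2\in\mathcal{S}$ and $k_2\neq k_1$ we have $k_2\in\mathcal{T}_1$, hence $p_{\mathcal{T}_1,k_2}=*$; symmetrically $k_1\in\mathcal{T}_2$ yields $p_{\mathcal{T}_2,k_1}=*$, which is exactly the required $2\times2$ pattern. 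This is the delicate point, because it is precisely the combinatorial fact that each of the two ``partner'' users sits in the other's cached row that makes the corresponding XOR decodable.

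Finally I would confirm the equivalence by specialising the construction of Theorem \ref{thm1} to this array. The placement rule \eqref{eq_p1} gives $\mathcal{Z}_k=\{W_{i,\mathcal{T}}:k\in\mathcal{T}\}$, which is the MN cache; and the delivery rule \eqref{eq_p2} at the time slot labelled $\mathcal{S}$ transmits $\bigoplus_{k\in\mathcal{S}}W_{d_k,\mathcal{S}\setminus\{k\}}$, which is exactly the MN coded multicast indexed by $\mathcal{S}$. Hence the PDA and the MN scheme induce identical placement and delivery phases, proving that the MN scheme is equivalent to the stated $(t+1)$-$(K,{K\choose t},{K-1\choose t-1},{K\choose t+1})$ PDA.
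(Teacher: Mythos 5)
Your proposal is correct. The paper itself gives no proof of this theorem --- it is quoted from \cite{YCTC} --- but your explicit construction (rows indexed by $t$-subsets $\mathcal{T}$, a star at $(\mathcal{T},k)$ iff $k\in\mathcal{T}$, and the integer label $\mathcal{T}\cup\{k\}$ otherwise) is exactly the construction used in that reference, and your parameter counts, the verification of C$1$ via $k_2\in\mathcal{T}_1$ and $k_1\in\mathcal{T}_2$, and the cell-by-cell match with the MN placement and delivery are all sound; note that the array $\mathbf{P}_{6\times 4}$ of Example \ref{E-pda} is precisely your construction for $K=4$, $t=2$.
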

When $K$ is very large, Shanmugam et al. in \cite{SJTLD} proposed a grouping method to reduce $F$. That is, for some integers $K'$ and $K$, assume that $K'|K$ (for the sake of simplicity). First we divide $K$ users into $\frac{K}{K'}$ groups with the same size, then we use MN PDA for each group. So the following result can be obtained.
\begin{lemma}
\rm(\cite{SJTLD})\label{lem-shanmugam}
For any positive integers $k$, $m$ and $t$ with $0< t< k$, there exists an $(mk,{k \choose t},{k-1 \choose t-1},m{k\choose t+1})$ PDA generated by MN PDA in Theorem \ref{th-AN-Y}.
\end{lemma}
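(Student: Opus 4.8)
The plan is to realize the array by a horizontal (column-wise) concatenation of $m$ independently relabeled copies of the MN PDA, mirroring the grouping idea: partition the $mk$ users into $m$ groups of size $k$ and serve each group by its own MN scheme, taking care that different groups never reuse the same integer symbol. Concretely, let $\mathbf{Q}=(q_{j,k'})$ be the MN PDA of Theorem \ref{th-AN-Y} with $K=k$, i.e.\ a $(t+1)$-$(k,{k\choose t},{k-1\choose t-1},{k\choose t+1})$ PDA on ${k\choose t}$ rows. For each $\ell\in[0,m)$ I would form $\mathbf{Q}^{(\ell)}$ from $\mathbf{Q}$ by leaving every star unchanged and replacing each integer $q_{j,k'}=s$ by $s+\ell{k\choose t+1}$. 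Then I would set
\begin{eqnarray*}
\mathbf{P}=\left(\mathbf{Q}^{(0)}\ \mathbf{Q}^{(1)}\ \cdots\ \mathbf{Q}^{(m-1)}\right),
\end{eqnarray*}
an ${k\choose t}\times mk$ array whose columns are indexed so that columns $[\ell k,(\ell+1)k)$ form the $\ell$-th block.

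The first routine step is to verify the parameters. Every column of $\mathbf{P}$ is a column of some $\mathbf{Q}^{(\ell)}$, and relabeling does not touch stars, so each column still carries exactly ${k-1\choose t-1}$ stars; hence $Z={k-1\choose t-1}$ and condition C$2$ holds. Because the shift by $\ell{k\choose t+1}$ maps the integer set $[0,{k\choose t+1})$ of $\mathbf{Q}$ onto the pairwise disjoint intervals $[\ell{k\choose t+1},(\ell+1){k\choose t+1})$, the total number of distinct integers appearing in $\mathbf{P}$ is exactly $m{k\choose t+1}=S$, and the number of rows stays at $F={k\choose t}$.

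The only condition requiring care is C$1$, and this is where the disjoint relabeling does the work. Suppose $p_{i_1,j_1}=p_{i_2,j_2}=s$ is an integer. Since the integer alphabets of the blocks are pairwise disjoint, the value $s$ forces $j_1$ and $j_2$ to lie in the same block $\ell$; restricted to that block the entries are those of the MN PDA $\mathbf{Q}$ (up to the common shift), which satisfies C$1$. Therefore $i_1\neq i_2$, $j_1\neq j_2$, and $q_{i_1,j_2}=q_{i_2,j_1}=*$ inside $\mathbf{Q}$; as stars are preserved by the relabeling and $j_1,j_2$ remain within block $\ell$, we obtain $p_{i_1,j_2}=p_{i_2,j_1}=*$ in $\mathbf{P}$ as required. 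Hence $\mathbf{P}$ is an $(mk,{k\choose t},{k-1\choose t-1},m{k\choose t+1})$ PDA.

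I expect the main obstacle to be purely conceptual rather than computational: one must confirm that horizontally gluing PDAs cannot create an offending pair of equal integers straddling two blocks, nor a forbidden $2\times 2$ rectangle across blocks. Both are ruled out simultaneously by enforcing disjoint integer labels, after which every potentially offending configuration lives inside a single MN block and is controlled by Theorem \ref{th-AN-Y}. As a byproduct, since $\mathbf{Q}$ is $(t+1)$-regular and the blocks use disjoint labels, $\mathbf{P}$ is $(t+1)$-regular as well, although regularity is not asserted by the statement.
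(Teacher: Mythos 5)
Your construction is correct and is exactly the grouping method the paper invokes just before the lemma: partition the $mk$ users into $m$ groups of size $k$ and place a disjointly relabeled MN PDA on each group, i.e.\ horizontally concatenate $m$ copies with shifted integer alphabets. Your verification of C$1$, C$2$ and the parameter count matches what the paper leaves implicit, so this is the same proof, just written out in full.
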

In order to reduce the packet number efficiently while the rate increases a little, the following PDA was constructed.
\begin{lemma}\rm(\cite{YCTC})\label{lem-yan}
For any positive integers $m$ and $q\geq2$, there exists a $(q(m+1),(q-1)q^m,(q-1)^2q^{m-1},q^m)$ PDA.
\end{lemma}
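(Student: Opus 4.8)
The plan is to exhibit the array explicitly and then check Definition~\ref{def-PDA} directly. I index the $K=q(m+1)$ columns by pairs $(i,l)$ with $i\in[0,m+1)$ (the ``block'') and $l\in[0,q)$ (the position inside the block), and I index the $F=(q-1)q^m$ rows by pairs $(\mathbf{b},e)$ with $\mathbf{b}=(b_0,\ldots,b_{m-1})\in[0,q)^m$ and $e\in[1,q)$; since $|[1,q)|=q-1$ this gives exactly $(q-1)q^m$ rows. The $S=q^m$ symbols are the vectors $\mathbf{s}\in[0,q)^m$ (read as their $q$-ary value). The entry rule is: for a block $i\in[0,m)$, set $p_{(\mathbf{b},e),(i,l)}$ to be the vector $\mathbf{s}$ with $s_i=(b_i+e)\bmod q$ and $s_j=b_j$ for $j\ne i$ when $b_i=l$, and $*$ otherwise; for the last block $i=m$, set $p_{(\mathbf{b},e),(m,l)}=\mathbf{b}$ when $\sum_{j=0}^{m-1}b_j+e\equiv l\pmod q$, and $*$ otherwise.

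Next I would count. Fix a column $(i,l)$ with $i<m$: its non-star rows are exactly those with $b_i=l$, of which there are $q^{m-1}(q-1)$ (free choice of the other $m-1$ coordinates of $\mathbf{b}$ and of $e$). For the last block, a column $(m,l)$ has a non-star entry in row $(\mathbf{b},e)$ iff $e\equiv l-\sum_j b_j\pmod q$ is solvable with $e\in[1,q)$, i.e.\ iff $\sum_j b_j\not\equiv l$, which again leaves $q^{m-1}(q-1)$ rows. Hence every column has $(q-1)q^m-q^{m-1}(q-1)=(q-1)^2q^{m-1}$ stars, which is condition~C$2$ with $Z=(q-1)^2q^{m-1}$. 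A parallel count shows each symbol occurs $(q-1)(m+1)$ times; in particular every vector of $[0,q)^m$ already appears as a value in block $m$, so $S=q^m$.

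The real work is condition~C$1$. For two equal integer entries at $(r_1,c_1)$ and $(r_2,c_2)$ I must show they lie in distinct rows and distinct columns and that the opposite corners $p_{r_1,c_2}$ and $p_{r_2,c_1}$ are stars. I would split into four cases according to the blocks of $c_1$ and $c_2$: both in one block $i<m$; in two distinct blocks of $[0,m)$; one in a block $i<m$ and one in block $m$; and both in block $m$. Distinctness of rows and columns is immediate in each case from the defining constraints (for instance, two occurrences of $\mathbf{s}$ inside a block $i<m$ are forced to take different values of $b_i$, hence different columns $l=b_i$ and different rows). The crux is the star property of the opposite corners, and the key tool is the identity that the row $(\mathbf{b},e)$ of any block-$i$ ($i<m$) non-star entry of value $\mathbf{s}$ satisfies $\sum_{j}b_j+e\equiv\sum_j s_j\pmod q$; consequently whether such a row also carries a non-star entry in a prescribed column of block $m$ is controlled solely by the residue $e$.

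I expect the mixed case---one entry in a block $i<m$ and the other in block $m$---to be the main obstacle, since it is the only case in which the two columns obey genuinely different membership rules, making the two corner checks asymmetric. Writing the block-$m$ occurrence's row as $(\mathbf{s},e')$ (its value is $\mathbf{b}$, so $\mathbf{b}=\mathbf{s}$) and the block-$i$ occurrence's column as $(i,l)$ with $l\ne s_i$: one opposite corner sits in block $i$ and is a star because its membership would demand the $i$-th coordinate $s_i$ of the row $\mathbf{s}$ to equal the column index $l\ne s_i$; the other sits in block $m$ and is a star because, by the identity above, its membership reduces to $e'\equiv 0$, impossible since $e'\in[1,q)$. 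The remaining three cases are analogous but simpler, every corner check collapsing to an inequality already built into the construction. Once the four cases are settled, C$1$ holds and, with the counts of the second step, $\mathbf{P}$ is the claimed $(q(m+1),(q-1)q^m,(q-1)^2q^{m-1},q^m)$ PDA.
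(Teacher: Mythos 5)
The paper does not actually prove this lemma: it is imported verbatim from \cite{YCTC}, so there is no internal proof to compare against. Judged on its own merits, your explicit construction is correct and is essentially a relabelled version of the construction in the cited reference (columns in $m+1$ blocks of size $q$, rows indexed by $[0,q)^m\times[1,q)$, symbols in $[0,q)^m$, with the last block of columns governed by the coordinate sum). I checked the quantitative claims: each column of a block $i<m$ is non-star on the $(q-1)q^{m-1}$ rows with $b_i=l$, each column of block $m$ is non-star exactly when $\sum_j b_j\not\equiv l$, so every column has $(q-1)^2q^{m-1}$ stars; each symbol occurs $(m+1)(q-1)$ times, consistent with $K(F-Z)=(m+1)(q-1)q^m$ total integer entries. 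The invariant $\sum_j b_j+e\equiv\sum_j s_j\pmod q$ for block-$i$ occurrences ($i<m$) is what makes condition C$1$ go through, and your treatment of the mixed case is the right one: the block-$m$ corner is a star precisely because membership would force $e'\equiv 0$, excluded by $e'\in[1,q)$. The three cases you defer really do collapse to the inequalities $l^1\neq l^2$, $b^1_{i_2}=s_{i_2}\neq s_{i_2}-e^2=l^2$, and $e^1\neq e^2$ respectively, so nothing is hidden there. The only thing I would ask you to add for a complete write-up is the one-line observation that every $\mathbf{s}\in[0,q)^m$ actually occurs (e.g.\ in block $m$ at row $(\mathbf{s},1)$), so that the integer alphabet is exactly of size $q^m$ and not smaller.
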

By increasing more $R$, there are some other PDAs with smaller $F$ constructed in a different light, such as resolvable design \cite{TR}, hypergraphs \cite{SZG}, strongly bipartite graphs \cite{YTCC}, Ruzsa-Szem\'{e}redi graphs \cite{STD} and so on. For example, the following result is obtained from the view point of hypergraphs.
\begin{lemma}\rm(\cite{SZG})\label{lem-shang}
There exists 
 an $({m\choose l}q^l,q^m(q-1)^l,(q^m-q^{m-l})(q-1)^l,q^m)$ PDA for any positive integers $q\geq 2$, $l$ and $m$ with $l\le m$.
\end{lemma}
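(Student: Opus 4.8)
The plan is to exhibit the array explicitly and then verify the PDA axiom C$1$ directly. I index the $K=\binom{m}{l}q^{l}$ columns by pairs $(T,\mathbf{a})$, where $T$ ranges over the $l$-subsets of $[0,m)$ and $\mathbf{a}=(a_j)_{j\in T}\in[0,q)^{T}$, and I index the $F=q^{m}(q-1)^{l}$ rows by pairs $(\mathbf{b},\mathbf{w})$, where $\mathbf{b}=(b_0,\dots,b_{m-1})\in[0,q)^{m}$ and $\mathbf{w}\in[1,q)^{l}$; since every $T$ has size $l$, I read $\mathbf{w}$ as a vector $(w_j)_{j\in T}\in[1,q)^{T}$ through the increasing bijection $[0,l)\to T$. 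In the cell indexed by row $(\mathbf{b},\mathbf{w})$ and column $(T,\mathbf{a})$ I place a star whenever $b_j\neq a_j$ for some $j\in T$, and otherwise (that is, when $\mathbf{b}|_{T}=\mathbf{a}$) I place the integer $\mathbf{v}\in[0,q)^{m}$ defined by $v_j=b_j$ for $j\in[0,m)\setminus T$ and $v_j=a_j+w_j \bmod q$ for $j\in T$.

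The parameter count is then routine and I would dispatch it first. A column $(T,\mathbf{a})$ carries a star exactly when $\mathbf{b}|_{T}\neq\mathbf{a}$, and the complementary event $\mathbf{b}|_{T}=\mathbf{a}$ leaves the $m-l$ entries $b_j$ ($j\notin T$) and the vector $\mathbf{w}$ free, giving $q^{m-l}(q-1)^{l}$ integer cells and hence $Z=(q^{m}-q^{m-l})(q-1)^{l}$ stars per column, which is C$2$. Because $\mathbf{w}$ has all coordinates nonzero we get $v_j\neq a_j$ on $T$, and letting $(T,\mathbf{a})$ range one checks that every $\mathbf{v}\in[0,q)^{m}$ occurs, so $S=q^{m}$; equivalently, each value is attained $\binom{m}{l}(q-1)^{l}$ times, and $K(F-Z)=\binom{m}{l}q^{m}(q-1)^{l}$ divided by this multiplicity returns $q^{m}$, so the array is in fact $\binom{m}{l}(q-1)^{l}$-regular.

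The real content is C$1$. For a fixed value $\mathbf{v}$, the cells carrying it are exactly those indexed by a choice of $l$-subset $T$ and a vector $\mathbf{w}\in[1,q)^{l}$: the column is $(T,\mathbf{a})$ with $a_j=v_j-w_j \bmod q$ for $j\in T$, and the row is the unique $(\mathbf{b},\mathbf{w})$ with $\mathbf{b}|_{T}=\mathbf{a}$ and $\mathbf{b}|_{[0,m)\setminus T}=\mathbf{v}|_{[0,m)\setminus T}$. Given two such cells, coming from $(T_1,\mathbf{w}^{(1)})$ and $(T_2,\mathbf{w}^{(2)})$, I must show they lie in distinct rows and columns and that the two ``cross'' cells are stars, and I would split on whether $T_1=T_2$. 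When $T_1=T_2$ the whole picture collapses to a single support set, a $q$-ary analogue of the diagonal pattern underlying the MN PDA, and distinctness forces $\mathbf{w}^{(1)}\neq\mathbf{w}^{(2)}$, which at once makes both cross cells stars.

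The decisive case is $T_1\neq T_2$, and this is where I expect the only genuine difficulty. The cross cell formed by the row of the first cell and the column of the second is an integer precisely when $\mathbf{b}^{(1)}|_{T_2}=\mathbf{a}^{(2)}$; writing this out coordinatewise reduces it to the requirement that $w^{(2)}_j=0$ for every $j\in T_2\setminus T_1$. Since $\lvert T_1\rvert=\lvert T_2\rvert$ and $T_1\neq T_2$ the set $T_2\setminus T_1$ is nonempty, while $\mathbf{w}^{(2)}$ has every coordinate nonzero, so the equality fails and the cross cell is a star, and symmetrically for the other cross cell. This is exactly where labelling rows by an everywhere-nonzero $\mathbf{w}$, and building the value by shifting $\mathbf{b}$ on $T$ by $\mathbf{w}$, pays off: it is what guarantees condition C$1$(b) across distinct support sets. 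I therefore expect the main obstacle to be not any single case but isolating a value function for which this cross-support star condition is automatic; once the ``all coordinates of $\mathbf{w}$ are nonzero'' mechanism is identified, the remaining checks (distinct rows and columns, the $T_1=T_2$ case, and the parameter counts) are mechanical.
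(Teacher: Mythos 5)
The paper does not prove this lemma at all --- it is quoted verbatim from \cite{SZG} --- so there is no internal proof to compare against. Your explicit construction is correct and is, in substance, the construction of \cite{SZG} translated from their $3$-partite hypergraph language directly into array form: the parameter counts check out ($q^{m-l}(q-1)^l$ integer cells per column gives C2 with $Z=(q^m-q^{m-l})(q-1)^l$, and the bijection between cells carrying a fixed value $\mathbf{v}$ and pairs $(T,\mathbf{w})$ gives $S=q^m$ and $\binom{m}{l}(q-1)^l$-regularity), and the C1 verification is sound, with the ``all coordinates of $\mathbf{w}$ nonzero'' device correctly forcing the cross cells to be stars both when $T_1=T_2$ (since then $\mathbf{a}^{(1)}\ne\mathbf{a}^{(2)}$) and when $T_1\ne T_2$ (since some $j\in T_2\setminus T_1$ would need $w^{(2)}_j=0$); the same device also gives row-distinctness in the $T_1\ne T_2$ case, which you leave as ``mechanical'' but which does go through. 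Modulo relabelling the vectors $\mathbf{v}\in[0,q)^m$ by the integers $[0,q^m)$, this is a complete proof of the lemma.
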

From the above illustrations, we can study an $F$-division $(K,M,N)$ coded caching scheme by means of a $(K,F,Z,S)$ PDA where $M/N=Z/F$ and $R=S/F$. And we can see that there is a tradeoff between rate and packet number intuitively. For convenience, a PDA is called Pareto-optimal if the scheme generated by it achieves the tradeoff between $R$ and $F$ among the schemes which can realized by PDAs. So in this paper we will consider the tradeoff between the lower bound on $R$ and $F$, and then obtain some related classes of Pareto-optimal PDAs. Based on these Pareto-optimal PDAs, we can obtain the corresponding schemes with low $R$ and $F$ when $K\leq N$.

\section{Lower bound on $S$}
\label{Lower bounds}
In this section, we derive two lower bounds on $S$ for given positive integers $K$, $F$ and $Z$ of a PDA. According to these lower bounds, we have that MN PDA is Pareto-optimal, and there exactly exists a tradeoff between the lower bound on $R$ and $F$.

\subsection{The first lower bound on $S$ }
Now let us consider our first lower bound on $S$ in the following. Let $\mathbf{P}$ be a $(K,F,S)$ PDA. For any integer $s\in [0,S)$, assume that there are $r_s$ entries in all, say $p_{j_u,k_u}$, $1\leq u\leq r_s$, $0\le j_u< F$ and
$0\le k_u<K$, such that $p_{j_u,k_u}=s$. Consider the subarray formed by rows $j_1,\cdots,j_{r_s}$ and columns $k_1,\cdots,k_{r_s}$, which is of order  $r_s\times r_s$ since $j_{u}\ne j_{v}$ and $k_u\ne k_v$ for all $1\le u\ne v\le r_s$ from the definition of a PDA. Further, we have $p_{j_u,k_v}=*$ for all $1\le u\ne v\le r_s$. That is to say,
this subarray is equivalent to the following $r_s\times r_s$ array
\begin{eqnarray}\label{Eqn_Matrix_1_1}
\mathbf{P}^{(s)}=\left(\begin{array}{cccc}
      s & * & \cdots & *\\
      * & s & \cdots & *\\
      \vdots & \vdots &\ddots & \vdots\\
      * & * & \cdots & s
    \end{array}\right)
\end{eqnarray}
with respect to row/column permutation.

For instance, in Example \ref{E-pda}, there are $3$ entries $p_{3,0}=p_{1,1}=p_{0,2}=0$ of $\mathbf{P}_{6\times 4}$. Then the following subarray formed by rows $0,1,3$ and columns $2,1,0$ can be obtained.
\begin{eqnarray*}
\mathbf{P}^{(0)}=\left(\begin{array}{ccc}
0      & *       &*\\
*      & 0       &*\\
*      &*        &0
\end{array}\right)
\end{eqnarray*}

\begin{theorem}
\label{th-lower bound}
If there exists a $(K,F,S)$ PDA $\mathbf{P}$ with $n>0$ integer entries and integer set $[0,S)$, then
\begin{eqnarray}
\label{eq-Lowebound-K<F}
\frac{nF}{KF+F-n}\leq S.
\end{eqnarray}
Further the above equality holds if and only if each row has $n/F$ integer entries and each integer $s\in [0,S)$ occurs exactly $n/S$ times.
\end{theorem}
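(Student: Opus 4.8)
The plan is to count, in two different ways, the incidences between star cells and the integers they are forced to accommodate, and then to squeeze that count between two quantities controlled by Cauchy--Schwarz. Throughout I write $r_s$ for the number of times an integer $s\in[0,S)$ occurs in $\mathbf{P}$ and $n_j$ for the number of integer entries in row $j$; by definition $\sum_{s}r_s=\sum_{j}n_j=n$, and row $j$ then contains exactly $K-n_j$ stars.

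The combinatorial heart of the argument is the inequality
\[
\sum_{s=0}^{S-1}r_s(r_s-1)\ \le\ \sum_{j=0}^{F-1}n_j(K-n_j).
\]
The left-hand side equals $\sum_s r_s^2-n$ and counts, for each integer $s$, the $r_s(r_s-1)$ off-diagonal cells of the subarray $\mathbf{P}^{(s)}$ of \eqref{Eqn_Matrix_1_1}. By condition C$1$ every such cell is a star, and it sits in a row and a column that each contain $s$. Hence the left-hand side is precisely the number of incidences between a star cell $(j,k)$ and an integer $s$ appearing in both row $j$ and column $k$. Since the integer entries of a single row are pairwise distinct, each star in row $j$ can be incident to at most $n_j$ such integers, and row $j$ carries $K-n_j$ stars; summing over rows gives the stated bound. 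Extracting this inequality from the forced-star structure of C$1$ is the step I expect to be the main obstacle.

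Next I would insert two applications of Cauchy--Schwarz. Because $\sum_s r_s=n$ over $S$ terms, $\sum_s r_s^2\ge n^2/S$, so the left-hand side is at least $n^2/S-n$; because $\sum_j n_j=n$ over $F$ terms, $\sum_j n_j^2\ge n^2/F$, so the right-hand side is at most $Kn-n^2/F$. Chaining these with the counting inequality yields $n^2/S-n\le Kn-n^2/F$. Dividing by $n>0$ and rearranging gives $n/S\le K+1-n/F=(KF+F-n)/F$, that is $S\ge nF/(KF+F-n)$, which is \eqref{eq-Lowebound-K<F}; note $KF+F-n>0$ since $n\le KF$.

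For the equality characterization I would track when every inequality above is simultaneously tight. Tightness of the two Cauchy--Schwarz steps holds exactly when all $r_s$ are equal, forcing $r_s=n/S$, and when all $n_j$ are equal, forcing $n_j=n/F$; these are the two stated regularity conditions and are clearly necessary for equality in \eqref{eq-Lowebound-K<F}. The delicate point is the forced-star count: its tightness additionally demands that every integer of a row be incident to every star of that row, and I would verify that, under the two uniformity conditions, this is equivalent to the parameter identity $n/S+n/F=K+1$ that equality in \eqref{eq-Lowebound-K<F} imposes. Reconciling the tightness of the counting step with the averaging conditions is where the argument needs the most care, so I would present the equality analysis by checking all three inequalities together rather than treating the regularity conditions in isolation.
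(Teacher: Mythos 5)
Your proposal is correct and follows essentially the same route as the paper's proof: the same double count of the stars forced into the subarrays $\mathbf{P}^{(s)}$, yielding $\sum_{s} r_s(r_s-1)\le \sum_{j} n_j(K-n_j)$, followed by the same two Cauchy--Schwarz estimates and the same rearrangement. The delicate point you flag in the equality analysis --- that tightness of the star-counting step is an extra structural condition beyond the two regularity conditions --- is genuine, but the paper's own proof simply asserts the regularity conditions as equivalent to equality without addressing it, so you are if anything slightly more careful there.
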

\begin{proof}
Assume that an integer $s\in [0,S)$ occurs exactly $r_s$ times in $\mathbf{P}$. From \eqref{Eqn_Matrix_1_1}, there are $r_s(r_s-1)$ stars in $\mathbf{P}^{(s)}$. Then the total number of stars in $\mathbf{P}^{(s)}$, $s=0$, $1$, $\ldots$, $S-1$, is
\begin{eqnarray}
\label{eq-cha4-1}
W=\sum\limits_{s=0}^{S-1}r_s(r_s-1).
\end{eqnarray}

On the other hand, we can estimate the value of $W$ in a different way.
Without loss of generality, assume that $K<F$, and row $i$ has exactly $r'_i$ integer entries, denoted by $s_{1,i}$, $\ldots$, $s_{r'_i,i}$. Then each star of the $i$-th row occurs in at most $r'_i$ times in $\mathbf{P}^{(s_{1,i})}$, $\ldots$, $\mathbf{P}^{(s_{r'_i,i})}$ since it occurs in $\mathbf{P}^{(s_{h,i})}$ at most once for any $h\in [1,r'_i]$. So the total number of occurrences of all the stars of $\mathbf{P}$ in $\mathbf{P}^{(s)}$, $s=0$, $1$, $\ldots$, $S-1$, is at most
\begin{eqnarray}
\label{eq-cha4-2}
W'=\sum\limits_{i=0}^{F-1}r'_i(K-r'_i).
\end{eqnarray}
Clearly $W\leq W'$, i.e.,
\begin{eqnarray*}
\sum\limits_{s=0}^{S-1}r_s(r_s-1)\leq\sum\limits_{i=0}^{F-1}r'_i(K-r'_i).
\end{eqnarray*}
This implies,
\begin{eqnarray}
\label{eq-IN 1}
\sum\limits_{s=0}^{S-1}r^2_s+\sum\limits_{i=0}^{F-1}{r'}^2_i\leq Kn +n.
\end{eqnarray}
Moreover,
\begin{eqnarray}
\label{eq-IN 2}
\sum\limits_{s=0}^{S-1}r^2_s\geq\frac{1}{S}\left(\sum\limits_{s=0}^{S-1}r_s\right)^2=\frac{n^2}{S}, \ \ \ \
\sum\limits_{i=0}^{F-1}{r'}^2_i\geq\frac{1}{F}\left(\sum\limits_{i=0}^{F-1}r'_i\right)^2=\frac{n^2}{F},
\end{eqnarray}
where the equalities hold if and only if $r_0=r_1=\ldots=r_{S-1}$ and $r'_0=r'_1=\ldots=r'_{F-1}$ respectively. Combining \eqref{eq-IN 1} and \eqref{eq-IN 2}, the inequality \eqref{eq-Lowebound-K<F} can be obtained. And the equality in \eqref{eq-Lowebound-K<F} holds if and only if $r_0=r_1=\ldots=r_{S-1}=\frac{n}{S}$ and $r'_0=r'_1=\ldots=r'_{F-1}=\frac{n}{F}$ are positive integers.
\end{proof}

From Theorem \ref{th-lower bound}, the following statement can be obtained.
\begin{lemma}
Given a $(K,F,Z,S)$ PDA $\mathbf{P}$, $S=\frac{(F-Z)KF}{KZ+F}$ if and only if it is a $g$-regular where $g=KZ/F+1$.
\end{lemma}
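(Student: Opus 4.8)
The plan is to recognize this lemma as the specialization of Theorem \ref{th-lower bound} to arrays satisfying the column-regularity condition C2, so that essentially all the work is already done and only bookkeeping remains. First I would pin down the exact number of integer entries. Since the array has $KF$ cells and, by C2, each of the $K$ columns contains exactly $Z$ stars, the total number of stars is $KZ$ and hence the number of integer entries is $n = K(F-Z)$. Substituting this value of $n$ into the lower bound $\frac{nF}{KF+F-n}$ of Theorem \ref{th-lower bound} and simplifying the denominator $KF+F-K(F-Z) = F+KZ$ shows that the bound equals $\frac{KF(F-Z)}{KZ+F}$. Thus the hypothesis $S = \frac{(F-Z)KF}{KZ+F}$ is precisely the statement that \eqref{eq-Lowebound-K<F} holds with equality.

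For the forward implication, I would invoke the equality clause of Theorem \ref{th-lower bound}: if equality holds, then every integer $s \in [0,S)$ occurs exactly $n/S$ times. This is exactly the defining property of a $g$-regular PDA with $g = n/S$, so it remains only to evaluate $g$. Using $n = K(F-Z)$ and $S = \frac{KF(F-Z)}{KZ+F}$ gives $g = n/S = (KZ+F)/F = KZ/F+1$, as claimed.

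For the converse, I would argue directly by counting rather than through the theorem. If $\mathbf{P}$ is $g$-regular with $g = KZ/F+1$, then each of the $S$ integers appears exactly $g$ times, so the total number of integer entries satisfies $Sg = n = K(F-Z)$; solving for $S$ and substituting $g = (KZ+F)/F$ yields $S = K(F-Z)/g = \frac{(F-Z)KF}{KZ+F}$.

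I do not expect a genuine obstacle here, since the substantive content is entirely carried by Theorem \ref{th-lower bound}; the only care needed is the translation of C2 into $n = K(F-Z)$ and the algebraic verification that the two expressions for $g$ agree. The one point worth stating precisely is the asymmetry between the directions: the forward direction uses only the ``each integer occurs $n/S$ times'' half of the equality condition, which is exactly $g$-regularity, whereas the converse is an independent counting identity. Together these two observations establish the stated ``if and only if.''
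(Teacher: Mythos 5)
Your proposal is correct and follows essentially the same route as the paper: the forward direction extracts $g$-regularity with $g=n/S$ from the equality clause of Theorem \ref{th-lower bound} after substituting $n=K(F-Z)$, and the converse is the same direct count $Sg=n$. Your explicit remark that the converse cannot simply cite the theorem's equality condition (since $g$-regularity alone does not give the row-regularity half) is a point the paper leaves implicit but handles identically.
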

\begin{proof}
Clearly $n=(F-Z)K$. When $S=\frac{(F-Z)KF}{KZ+F}$, i.e., $S=\frac{nF}{KF+F-n}$, from Theorem \ref{th-lower bound} the number of occurrence of each integer in $[0,S)$ is
\begin{eqnarray*}
g=\frac{n}{S}=\frac{(F-Z)K}{\frac{nF}{KF+F-n}}=\frac{KF+F-n }{F}=\frac{KZ}{F}+1.
\end{eqnarray*}

Conversely if $\mathbf{P}$ is $(\frac{KZ}{F}+1)$ regular, we have $S(\frac{KZ}{F}+1)=n$ by counting the number of integer entries. So
$$S=\frac{n}{1+\frac{KZ}{F}}=\frac{nF}{F+KZ}=\frac{K(F-Z)F}{KF+F-K(F-Z)}=\frac{(F-Z)KF}{Z}.$$
\end{proof}
\begin{lemma}(\cite{YCTC})
\label{le-lower-F}For a $g$-$(K,F,Z,S)$ PDA, if $g=KZ/F+1$, then
$F\geq{K\choose KZ/F}$.
\end{lemma}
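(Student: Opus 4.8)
The plan is to exploit the regularity hypothesis to pin down the exact star pattern of every row, and then run a connectivity argument on the family of these patterns. Write $t=KZ/F$, a nonnegative integer by the regularity hypothesis (the degenerate cases $t=0$ and $t=K$ give $\binom{K}{t}=1\le F$ at once, so assume $0<t<K$). First I would observe that the hypothesis $g=KZ/F+1$ means precisely that the bound of Theorem~\ref{th-lower bound} is met with equality: by the Lemma preceding this one, $g$-regularity with $g=KZ/F+1$ is equivalent to $S=\frac{(F-Z)KF}{KZ+F}$, which is the equality case $S=\frac{nF}{KF+F-n}$ with $n=(F-Z)K$. Hence, invoking the equality condition of Theorem~\ref{th-lower bound}, each row carries exactly $n/F=K-t$ integer entries, i.e.\ exactly $t$ stars. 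Thus every row $j$ determines a $t$-subset $B_j\subseteq[0,K)$, namely the set of columns in which row $j$ has a star.

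Next I would extract the combinatorial skeleton attached to each integer. Fix $s\in[0,S)$; it occurs $g=t+1$ times and, as recorded in \eqref{Eqn_Matrix_1_1}, these occurrences span $t+1$ distinct rows and $t+1$ distinct columns, forming a $(t+1)\times(t+1)$ subarray with $s$ on the diagonal and stars off it. Let $C_s\subseteq[0,K)$ be the $(t+1)$-set of these columns. In the row where $s$ sits in column $k\in C_s$, the remaining $t$ columns of $C_s$ all carry stars; since that row has only $t$ stars in total, its star set is forced to be exactly $C_s\setminus\{k\}$. Consequently the $t+1$ rows containing $s$ have star patterns precisely equal to the $t+1$ distinct $t$-subsets of $C_s$. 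In the opposite direction, if a row has pattern $B$ and an integer $s$ sits in one of its columns $k\notin B$, then $C_s=B\cup\{k\}$.

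Finally I would propagate occurrences across patterns and close by connectivity. Let $m_B$ denote the number of rows whose star pattern equals the $t$-subset $B$; then $F=\sum_B m_B$, so it suffices to show $m_B\ge 1$ for every $t$-subset $B$. Suppose $m_B\ge1$ and let $B'$ be any $t$-subset with $|B\cap B'|=t-1$, say $B'=(B\setminus\{k''\})\cup\{k\}$ with $k\notin B$. A row with pattern $B$ has an integer $s$ in column $k$, and by the previous paragraph $C_s=B\cup\{k\}\supseteq B'$; hence $s$ also occurs in a row whose pattern is the $t$-subset $B'$ of $C_s$, giving $m_{B'}\ge1$. Thus the set of realized patterns is closed under passing to neighbors in the Johnson graph $J(K,t)$, which is connected for $0<t<K$. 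Since $F>0$ forces at least one pattern to be realized, every $t$-subset is realized, whence $F=\sum_B m_B\ge\binom{K}{t}$. The crux, and the step I expect to be the main obstacle to state cleanly, is the rigidity observation of the second paragraph: it is exactly the combination of $t+1$ occurrences per integer with the forced count of $t$ stars per row that makes each integer behave like an MN-type $(t+1)$-subset, and without the exact star count this rigidity fails.
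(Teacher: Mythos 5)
Your argument is correct. Note first that the paper itself offers no proof of this lemma --- it is quoted from \cite{YCTC} --- so there is no in-paper derivation to compare against; what you have written is a self-contained proof built on the machinery the paper does develop. The chain of reductions is sound: $g$-regularity with $g=KZ/F+1$ forces $S=\frac{nF}{KF+F-n}$ with $n=(F-Z)K$ (the unnumbered lemma after Theorem~\ref{th-lower bound}), and equality there forces equality in \eqref{eq-IN 2}, hence every row has exactly $n/F=K-t$ integers and so exactly $t$ stars --- you only need the ``only if'' direction of the equality condition, which is exactly what the Cauchy--Schwarz step in the proof of Theorem~\ref{th-lower bound} delivers. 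The rigidity step is the real content and you state it correctly: the $t+1$ occurrences of an integer $s$ sit in a $(t+1)\times(t+1)$ subarray of the form \eqref{Eqn_Matrix_1_1}, so each of those rows already carries $t$ stars inside the column set $C_s$, and the exact star count pins its star pattern to a $t$-subset of $C_s$; conversely any integer in column $k\notin B$ of a row with pattern $B$ has $C_s=B\cup\{k\}$. The closure of the set of realized patterns under Johnson-graph adjacency, together with connectedness of $J(K,t)$ for $0<t<K$ and nonemptiness, then gives $F=\sum_B m_B\ge\binom{K}{t}$. In effect you have shown the stronger structural fact that such a PDA must contain every $t$-subset as a row pattern (i.e.\ it contains an MN PDA up to multiplicity), which is precisely why the bound is tight for the MN PDA of Theorem~\ref{th-AN-Y}. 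No gaps.
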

From Theorem \ref{th-lower bound} and Lemma \ref{le-lower-F}, the following result can be obtained.
\begin{theorem}
\label{th-AN-optimal PDA}
For a $(K,F,Z,S)$ PDA with $S=\frac{(F-Z)KF}{KZ+F}$, then $F\geq{K\choose KZ/F}$.
\end{theorem}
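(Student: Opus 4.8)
The plan is to read this statement as the equality case of the lower bound in Theorem~\ref{th-lower bound}, and then feed the resulting regularity into Lemma~\ref{le-lower-F}. First I would record that, by condition C$2$, every column of a $(K,F,Z,S)$ PDA contains exactly $Z$ stars and hence $F-Z$ integer entries, so the total number of integer entries is $n=(F-Z)K$. Substituting this into the bound $\frac{nF}{KF+F-n}\le S$ of \eqref{eq-Lowebound-K<F} and simplifying the denominator via $KF+F-n=F+KZ$ shows that the hypothesis $S=\frac{(F-Z)KF}{KZ+F}$ is precisely the statement that equality is attained in \eqref{eq-Lowebound-K<F}.

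Next I would invoke the equality characterization. By the Lemma immediately preceding Lemma~\ref{le-lower-F} (the one asserting that $S=\frac{(F-Z)KF}{KZ+F}$ holds if and only if the array is $g$-regular with $g=KZ/F+1$), this equality forces the PDA to be $g$-regular with $g=KZ/F+1$; equivalently, each integer of $[0,S)$ appears exactly $n/S=KZ/F+1$ times and each row carries exactly $n/F$ integer entries, as dictated by the equality conditions of Theorem~\ref{th-lower bound}. In particular this already requires $KZ/F$ to be a nonnegative integer, so the binomial coefficient ${K\choose KZ/F}$ is well defined.

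Finally I would apply Lemma~\ref{le-lower-F} verbatim: for a $g$-$(K,F,Z,S)$ PDA with $g=KZ/F+1$ it yields $F\ge{K\choose KZ/F}$, which is exactly the desired conclusion. The whole argument is therefore a two-step chaining, $S=\frac{(F-Z)KF}{KZ+F}\Rightarrow g\text{-regular with }g=KZ/F+1\Rightarrow F\ge{K\choose KZ/F}$, and no new estimate is required. Because every ingredient is already available, I do not anticipate a genuine obstacle in this proof; the substantive work lives in the quoted results. If one had to reprove the statement from scratch, the hard part would be Lemma~\ref{le-lower-F} itself, namely showing that maximal regularity $g=KZ/F+1$ pins down the star patterns of the rows tightly enough (essentially recovering the subset structure of the MN PDA) to force $F\ge{K\choose KZ/F}$; but under the convention that earlier results may be assumed, the present theorem follows immediately.
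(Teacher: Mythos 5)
Your proposal is correct and follows exactly the route the paper intends: the hypothesis $S=\frac{(F-Z)KF}{KZ+F}$ with $n=(F-Z)K$ is the equality case of \eqref{eq-Lowebound-K<F}, which by the lemma preceding Lemma~\ref{le-lower-F} forces the PDA to be $g$-regular with $g=KZ/F+1$, and Lemma~\ref{le-lower-F} then gives $F\geq{K\choose KZ/F}$. The paper states this theorem as an immediate consequence of Theorem~\ref{th-lower bound} and Lemma~\ref{le-lower-F} without writing out the chain, and your write-up is precisely that chain made explicit.
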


When $S=\frac{(F-Z)KF}{KZ+F}$, i.e., $R=S/F=\frac{(F-Z)K}{KZ+F}=K(1-\frac{M}{N})/(K\frac{M}{N}+1)$. This is exactly the rate of MN PDA.
\begin{remark}
\label{re2}
In \cite{YCTC}, the authors showed that an MN PDA has the minimum number of rows among the $g$-regular PDAs with the same parameters $K$, $M/N$ and $R=K(1-\frac{M}{N})/(\frac{M}{N}+1)$. From Theorem \ref{th-AN-optimal PDA}, we have that MN PDA is Pareto-optimal.
\end{remark}

\subsection{The second lower bound on $S$}
\begin{theorem}\label{th-digui-bound}
Given any positive integers $K,F,Z$ with $0<Z<F$, if there exists a $(K,F,Z,S)$ PDA, then
\begin{eqnarray}\label{eq-digui-bound}
S\geq \left\lceil\frac{(F-Z)K}{F}\right\rceil+\left\lceil\frac{F-Z-1}{F-1}\left\lceil\frac{(F-Z)K}{F}\right\rceil\right\rceil+\ldots+
\left\lceil\frac{1}{Z+1}\left\lceil\frac{2}{Z+2}\left\lceil\ldots\left\lceil\frac{(F-Z)K}{F}\right\rceil \ldots\right\rceil\right\rceil\right\rceil.
\end{eqnarray}
\end{theorem}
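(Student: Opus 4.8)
The plan is to prove the bound by induction on $F-Z$, the number of integer entries per column, realizing the nested-ceiling expression on the right of \eqref{eq-digui-bound} as the unrolling of a single one-step recursion. Write $u_1=\lceil (F-Z)K/F\rceil$ and, for $1\le j\le F-Z-1$, set $u_{j+1}=\lceil \tfrac{F-Z-j}{F-j}\,u_j\rceil$; one checks directly that the right-hand side of \eqref{eq-digui-bound} is exactly $u_1+u_2+\cdots+u_{F-Z}$. Writing $S(K,F,Z)$ for the least number of integer symbols in any PDA with the indicated parameters, the core claim I would establish is the recursive inequality
\[
S(K,F,Z)\ \ge\ u_1+S(u_1,F-1,Z).
\]
Unrolling this and matching indices reproduces the full sum: the parameters $(u_1,F-1,Z)$ have first term $\lceil\tfrac{(F-1)-Z}{F-1}u_1\rceil=\lceil\tfrac{F-Z-1}{F-1}u_1\rceil=u_2$, and the recursion is invariant under the shift $j\mapsto j+1$, so the bound for $(u_1,F-1,Z)$ is precisely $u_2+\cdots+u_{F-Z}$.

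To obtain the recursive inequality I would argue as follows. The $(F-Z)K$ integer entries are spread over $F$ rows, so some row $i_0$ contains at least $\lceil (F-Z)K/F\rceil=u_1$ integers. By C$1$a two equal integers cannot share a row, so these $u_1$ symbols are pairwise distinct; I collect any $u_1$ of the columns in which row $i_0$ carries an integer into a set $\mathcal J$, and the corresponding symbols into a set $T$ with $|T|=u_1$. I then form the subarray $\mathbf P'$ by deleting row $i_0$ and keeping exactly the columns in $\mathcal J$. Since deleting rows and columns preserves C$1$, and since each retained column loses precisely one integer (the entry in row $i_0$) while keeping all $Z$ of its stars, $\mathbf P'$ is an $(u_1,F-1,Z,S')$ PDA. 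This is what produces the reduced parameters: column-count $u_1$, row-count $F-1$, and still $Z$ stars per column.

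The crux is the disjointness of $T$ from the symbols of $\mathbf P'$, which yields $S\ge |T|+S'=u_1+S'$. Suppose a symbol $s=p_{i_0,j}\in T$ (with $j\in\mathcal J$) also occurred at a retained position $p_{i',j'}$ of $\mathbf P'$, so that $i'\ne i_0$ and $j'\in\mathcal J$. If $j'=j$ then $p_{i_0,j}=p_{i',j}=s$ would be two equal integers in one column, forbidden by C$1$a; if $j'\ne j$ then C$1$b forces $p_{i_0,j'}=*$, contradicting that $j'\in\mathcal J$ was chosen as a column where row $i_0$ carries an integer. Hence $T$ cannot meet $\mathbf P'$, and the recursive inequality follows; the induction bottoms out at $F-Z=1$, where the sum is the single term $\lceil K/F\rceil$ and the claim is the pigeonhole statement that some row holds at least $\lceil K/F\rceil$ of the $K$ integers, all distinct by C$1$a. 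I expect this disjointness step to be the main obstacle: the reduction must delete the heavy row \emph{and} simultaneously restrict to the columns it marks with integers, because this is exactly the configuration in which C$1$b excludes any reappearance of a recorded symbol; deleting the row alone would leave the recursively counted symbols possibly coinciding with those in $T$ and break the additivity.
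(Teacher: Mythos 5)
Your proof is correct and takes essentially the same route as the paper's: find a row with at least $\lceil (F-Z)K/F\rceil$ integers, delete it while restricting to the columns it marks with integers to obtain a $\left(\lceil (F-Z)K/F\rceil, F-1, Z, S'\right)$ PDA whose integer symbols are disjoint from those recorded, and recurse $F-Z$ times. Your explicit C$1$a/C$1$b case analysis for the disjointness step is the careful version of what the paper merely asserts (and misattributes to C$2$), so there is nothing substantively different to compare.
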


\begin{proof}
Suppose that $\mathbf{P}$ is a $(K,F,Z,S)$ PDA.
Totally, there are $(F-Z)K$ integers in this array. Thus, among the $F$ rows,
there must exist one row containing at lest $\left\lceil \frac{(F-Z)K}{F} \right\rceil$ integers.
Without loss of generality, assume that these $\left\lceil \frac{(F-Z)K}{F} \right\rceil$ integers, say $0$, $1$, $\ldots$, $\left\lceil \frac{(F-Z)K}{F} \right\rceil-1$, are in the first row, i.e.,
\begin{eqnarray*}
\label{eq-digui}
\mathbf{P}=\left(\begin{array}{ccccc|c}
0      & 1     & 2     & \cdots & \left\lceil \frac{(F-Z)K}{F} \right\rceil-1 &\  \cdots\ \ \\ \hline
              &       &  &   &   &\\
       &       &&  \mathbf{P}' &   &\ \ddots\ \  \\
              &       &  &   &   &
                   \end{array}
  \right).
\end{eqnarray*}
If not, we can get such form  of $\mathbf{P}$ by row/column permutations.
Clearly $\mathbf{P}'$ is an $(\left\lceil \frac{(F-Z)K}{F} \right\rceil, F-1, Z, S')$ PDA for one nonnegative integer $S'$. Further, we
know that all the integers in  $[0, \left\lceil \frac{(F-Z)K}{F} \right\rceil)$
do not appear in $\mathbf{P}'$.  Otherwise, it would contradict Property C2. Therefore,  we have
\begin{eqnarray}\label{eq-diyiceng-bound}
S\ge \left\lceil \frac{(F-Z)K}{F} \right\rceil+S'.
\end{eqnarray}

Perform  the same argument to $P'$ till that the remaining array is a $(\left\lceil\frac{1}{Z+1}\left\lceil\frac{2}{Z+2}\left\lceil\ldots\left\lceil\frac{(F-Z)K}{F}\right\rceil \ldots\right\rceil\right\rceil\right\rceil,Z,Z,0)$ PDA. Then, the bound in \eqref{eq-digui-bound} follws by recursively applying
\eqref{eq-diyiceng-bound} $F-Z$ times.
\end{proof}

\begin{example}\label{ex-(6,8,5,5)PDA}
When $K=6$, $F=8$ and $Z=5$, $\left\lceil\frac{(F-Z-1)}{F-1}\left\lceil\frac{(F-Z)K}{F}\right\rceil\right\rceil=1$. So we have $S\geq 3+3-1=5$. It is easy to check that the following array is an optimal $(6,8,5,5)$ PDA .
\begin{eqnarray*}
\mathbf{P}_{8\times 6}=\left(\begin{array}{cccccc}
 0    &*     &*    &*   &3   &*\\
 1    &3     &*    &*   &*   &4\\
 *    &0     &1    &*   &*   &*\\
 2    &*     &3    &*   &*   &*\\
 *    &2     &*    &1   &*   &*\\
 *    &*     &4    &0   &2   &*\\
 *    &*     &*    &*   &1   &0\\
 *    &*     &*    &3   &*   &2
\end{array}\right)
\end{eqnarray*}
\end{example}

From \eqref{eq-digui-bound}, we have
\begin{eqnarray}\label{eq-digui-bound-Lower}
\begin{split}
S&\geq \frac{(F-Z)K}{F}+\frac{F-Z-1}{F-1}\frac{(F-Z)K}{F}+\ldots+
\frac{1}{Z+1}\frac{2}{Z+2}\ldots\frac{(F-Z)K}{F}\\
&=K\frac{F-Z}{F}\left(1+\frac{F-Z-1}{F-1}+\cdots+
\frac{1}{Z+1}\frac{2}{Z+2}\cdots\frac{F-Z-1}{F-1}\right)\\
&\geq \frac{(F-Z)K}{Z+1}
\end{split}
\end{eqnarray}
since \begin{eqnarray*}
&&1+\frac{F-Z-1}{F-1}+\cdots+\frac{2}{Z+2}\frac{3}{Z+3}\cdots\frac{F-Z-1}{F-1}+
\frac{1}{Z+1}\frac{2}{Z+2}\cdots\frac{F-Z-1}{F-1}\\
&=& 1+\frac{F-Z-1}{F-1}+\cdots+\left(1+\frac{1}{Z+1}\right)\frac{2}{Z+2}\frac{3}{Z+3}\cdots\frac{F-Z-1}{F-1}\\
&=& 1+\frac{F-Z-1}{F-1}+\ldots+\frac{Z+2}{Z+1}\frac{2}{Z+2}\ldots\frac{F-Z-1}{F-1}\\
&=& 1+\frac{F-Z-1}{F-1}+\cdots+\frac{2}{Z+1}\frac{3}{Z+3}\ldots\frac{F-Z-1}{F-1}\\
&=& 1+\frac{F-Z-1}{F-1}+ \frac{F-2}{Z+1}\frac{F-Z-2}{F-2}\frac{3}{Z+3}\frac{F-Z-1}{F-1}\\
&=& 1+\frac{F-Z-1}{F-1}+ \frac{F-Z-2}{Z+1}\frac{F-Z-1}{F-1}\\
&=& 1+\left(1+\frac{F-Z-2}{Z+1}\right)\frac{F-Z-1}{F-1}\\
&=& 1+\frac{F-Z-1}{Z+1}=\frac{F}{Z+1}.
\end{eqnarray*}
Given a positive integer $K$ and a positive real number $M/N$, let $Z/F=M/N$. Then the following inequality can be obtained by \eqref{eq-digui-bound-Lower}.
\begin{eqnarray}
\label{eq-tradeoff}
R=\frac{S}{F}\geq\frac{K(1-\frac{M}{N})}{F\frac{M}{N}+1}
\end{eqnarray}
From the above formula, we can see that the lower bound on $R$ increases with $F$ decreasing for any fixed $K$ and $M/N$.
And we will show that the equality in \eqref{eq-tradeoff} holds for some fixed parameters in Subsection \ref{the second optimal PDA}.
However we claim that this lower bound is not tight when $F>K$ since reference \cite{WTP} showed that $R=\frac{K(1-\frac{M}{N})}{\frac{KM}{N}+1}$ is the minimum rate. When $F>K$,
$$\frac{K(1-\frac{M}{N})}{\frac{FM}{N}+1}<\frac{K(1-\frac{M}{N})}{\frac{KM}{N}+1}$$
holds for any positive integers $K$, $M$ and $N$. Clearly this is impossible.

Unfortunately according to these two lower bound on $S$ in this section, it is not easy to check whether one of the other know PDAs is Pareto-optimal. But according to the two new classes of PDAs obtained in Section \ref{combinatorial designs}, we can estimate the performance of some known PDAs indirectly in Section \ref{sec-comparison}.

\section{Characterization of PDA from combinatorial designs}
\label{combinatorial designs}
In this section, we characterize a PDA by means of a set of $3$ dimensional vectors. Consequently several classes of PDAs are obtained. For ease of exposition, an $F\times K$ array $\mathbf{P}=(p_{i,j})$, $0\leq i<F$, $0\leq j<K$, on $[0,S)\cup\{*\}$ will be represented by a set of ordered triples 
$\mathcal{C}=\{(i, j, p_{i,j})^T\ | p_{i,j}\in [0,S)\}$ in this section.
 Clearly an array corresponds to an unique vector set, and the converse is also correct. So we assume that $\mathcal{C}\subseteq [0,F)\times [0,K)\times [0,S)$ and is called the incidence set of $\mathbf{P}$ in the following.
\begin{example}\rm
\label{associated set}
The $(4,6,3,4)$ PDA $\mathbf{P}_{6\times 4}$ in Example \ref{E-pda} can be represented by the following vector set where each column is a vector.
\begin{eqnarray*}
\mathcal{C}=\left(\begin{array}{ccccccccccccccccccccccccccc}
0 & 0 & 1 & 1 & 2 & 2 & 3 & 3 & 4 & 4 & 5 & 5\\
2 & 3 & 1 & 3 & 1 & 2 & 0 & 3 & 0 & 2 & 0 & 1 \\
0 & 1 & 0 & 2 & 1 & 2 & 0 & 3 & 1 & 3 & 2 & 3
\end{array}\right)
\end{eqnarray*}
\end{example}

Let ${\bm x}$ and ${\bm y}$ be elements of $\mathcal{C}$. The (Hamming) distance between ${\bm x}$ and ${\bm y}$, denoted by $d({\bm x},{\bm y})$, is defined to be the number of coordinates at which ${\bm x}$ and ${\bm y}$ differ. And the (minimum) distance of $\mathcal{C}$, denoted by $d(\mathcal{C})$, is
$$d(\mathcal{C})=\hbox{min}\{d({\bm x},{\bm y})\ |\ {\bm x}, {\bm y}\in \mathcal{C}, {\bm x}\neq {\bm y}\}.$$

For instance, $d(\mathcal{C})=2$ in Example \ref{associated set}. Furthermore it is easy to check that the following $\triangle$ is not a subset of $\mathcal{C}$ in Example \ref{associated set}, where $i_1\neq i_2\in [0,6)$, $j_1\neq j_2\in [0,4)$ and $a\neq b\in [0,4)$.
\begin{eqnarray}
\label{fo-r-pda}
\triangle=\left(
                     \begin{array}{ccc}
                       i_1 & i_1 & i_2\\
                       j_1 & j_2 & j_2\\
                       a   & b   & a
                     \end{array}
                   \right)
\end{eqnarray}
In fact, we can show that $\triangle\not\subseteq\mathcal{C}$ from C1. We call such $\triangle$ a forbidden configuration of $\mathcal{C}$.
\begin{theorem}\rm
\label{the-pda-pda}
There exists a $(K,F,S)$ PDA with $n$ integer entries if and only if there exists a set $\mathcal{C}$ with cardinality $n$ satisfying
\begin{itemize}
\item[P1:] the minimum Hamming distance is at least $2$, and
\item[P2:] $\triangle$ in \eqref{fo-r-pda} is the forbidden configuration.
\end{itemize}
\end{theorem}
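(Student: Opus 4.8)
The plan is to establish the equivalence through the explicit correspondence already set up in this section: to an $F\times K$ array $\mathbf{P}$ on $[0,S)\cup\{*\}$ we attach the triple set $\mathcal{C}=\{(i,j,p_{i,j})^T\mid p_{i,j}\in[0,S)\}$, and conversely to a triple set $\mathcal{C}\subseteq[0,F)\times[0,K)\times[0,S)$ in which no two elements share their first two coordinates we attach the array with $p_{i,j}=a$ when $(i,j,a)\in\mathcal{C}$ and $p_{i,j}=*$ otherwise. In either direction the number of integer entries of $\mathbf{P}$ equals $|\mathcal{C}|=n$, so it remains only to show that $\mathbf{P}$ satisfies C1 if and only if $\mathcal{C}$ satisfies P1 and P2.

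For the forward implication I would start from a PDA $\mathbf{P}$ and read off P1 by classifying the three ways two distinct triples could lie at Hamming distance $1$. Agreeing in the first two coordinates is impossible because a cell of an array carries a single symbol; agreeing in the first and third (respectively the second and third) coordinates would place the same integer twice in one row (respectively one column), which C1a forbids. Hence $d(\mathcal{C})\ge 2$. For P2, if the configuration $\triangle$ of \eqref{fo-r-pda} occurred, then $p_{i_1,j_1}=p_{i_2,j_2}=a$ would be equal integers in distinct rows and columns, so C1b would force $p_{i_1,j_2}=*$, contradicting $(i_1,j_2,b)\in\mathcal{C}$ with $b\neq*$.

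For the converse I would first note that P1 already rules out two triples sharing their first two coordinates, so the array $\mathbf{P}$ is well defined with exactly $n$ integer entries. Given equal integer entries $p_{i_1,j_1}=p_{i_2,j_2}=s$, applying P1 to the triples $(i_1,j_1,s)$ and $(i_2,j_2,s)$ forces $i_1\ne i_2$ and $j_1\ne j_2$, which is C1a. To recover C1b one supposes $p_{i_1,j_2}$ is an integer $b$; P1 excludes $b=s$ (that would repeat $s$ in row $i_1$ or column $j_2$), and then $(i_1,j_1,s)$, $(i_1,j_2,b)$, $(i_2,j_2,s)$ form a forbidden $\triangle$, so in fact $p_{i_1,j_2}=*$.

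The step I expect to be the main obstacle is the remaining half of C1b, namely $p_{i_2,j_1}=*$, since $\triangle$ as written only names the corner $(i_1,j_2)$. The resolution is that $\triangle$ is forbidden for \emph{all} admissible index choices, and the relabelling $i_1\leftrightarrow i_2$, $j_1\leftrightarrow j_2$ turns a potential non-star at $(i_2,j_1)$ into an instance of $\triangle$ of exactly the same shape, namely the triples $(i_2,j_2,s)$, $(i_2,j_1,b')$, $(i_1,j_1,s)$ read as a forbidden configuration with first row-index $i_2$. I would write out this relabelling explicitly so that the single forbidden configuration P2 enforces both star positions demanded by C1b. Once both corners are handled, C1 holds and the equivalence is complete.
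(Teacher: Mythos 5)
Your proposal is correct and follows essentially the same route as the paper's own proof: attach the incidence set, verify P1 by ruling out distance-$1$ pairs via C1a, verify P2 directly from C1b, and in the converse recover C1b by exhibiting a forbidden $\triangle$. The two points you flag as potential obstacles (that $b\neq s$ is needed for $\triangle$ to be admissible, and that the corner $p_{i_2,j_1}$ requires the relabelling $i_1\leftrightarrow i_2$, $j_1\leftrightarrow j_2$) are exactly what the paper compresses into a ``without loss of generality,'' so your extra care only makes the same argument more explicit.
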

\begin{proof} Given a $(K,F,S)$ PDA $\mathbf{P}$, denote its incidence set by $\mathcal{C}$. Clearly the number of integer entries equals the order of $\mathcal{C}$. Firstly, assume that there exist two distinct vectors, say $(i_1,j_1,a_1)^T$ and $(i_2,j_2,a_2)^T\in \mathcal{C}$, with distance less than $2$. It is easy to check that at most one of equalities $i_1=i_2$ and $j_1=j_2$ holds since every entry has at most one symbol in $\mathbf{P}$. Without loss of generality we assume that $i_1=i_2$. Then we have $a_1=a_2$. This contradicts C1-a) of Definition \ref{def-PDA}. So P1 holds. Secondly, suppose that $\triangle$ in \eqref{fo-r-pda} is a subset of $\mathcal{C}$. Then we have $p_{i_1,j_1}=p_{i_2,j_2}=a$, $p_{i_1,j_2}=b$, $a,b\in [0,S)$, a contradiction to C1-b) of Definition \ref{def-PDA}. So P2 holds.

Conversely, assume that a set $\mathcal{C}$ with cardinality $n$ satisfies P1 and P2. From P1, for any integers $i\in [0,F)$ and $j\in [0,K)$, there is at most one integer $a\in [0,S)$ such that $(i,j,a)^T\in \mathcal{C}$. Then we can define an $F\times K$ array $\mathbf{P}=(p_{i,j})$ in the following way:
\begin{eqnarray}
\label{set-pda}
p_{i,j}=\left\{\begin{array}{ll}
a & \  \mbox{if}\ \ (i,j,a)^T\in \mathcal{C},\\
* & \  \mbox{otherwise}.
\end{array}
\right.
\end{eqnarray}
Clearly there are $n$ integer entries, and the integer set is $[0,S)$. For any two distinct entries $p_{i_1,j_1}$ and $p_{i_2,j_2}$, assume that $p_{i_1,j_1}=p_{i_2,j_2}=s\in [0,S)$. Then
$i_1\ne i_2$ and $j_1\ne j_2$ hold from P1. So C1-a) holds. And $p_{i_1,j_2}=p_{i_2,j_1}=*$. Otherwise, without loss of generality, suppose that $p_{i_1,j_2}\in [0,S)$. Then by \eqref{set-pda}, there exists a subset $\triangle =\{(i_1,j_1,s)^T$, $(i_1,j_2,p_{i_1,j_2})^T$, $(i_2,j_2,s)^T\}\subseteq \mathcal{C}$, a contradiction to P2. So C1-b) holds.
\end{proof}

From Theorem \ref{the-pda-pda}, we can study PDA by means of discussing its incidence set $\mathcal{C}$. Given a set $\mathcal{C}$, conjugates of $\mathcal{C}$ are defined by rearranging the coordinates of $\mathcal{C}$. Let $\mathcal{C}_{(l_0,l_1,l_2)}$ be the conjugate set obtained by rearranging the coordinates of $\mathcal{C}$ in the order $(l_0,l_1,l_2)\in \mathcal{L}$, where $\mathcal{L}$ is the set formed by all the permutations of $[0,3)$.   For instance, $\mathcal{C}$ can be written as $\mathcal{C}_{(0,1,2)}$. $\mathcal{C}_{(2,1,0)}$ is obtained by changing the first coordinate and the third coordinate of $\mathcal{C}$. It is very easy to verify that the following statement holds.
\begin{lemma}
\label{le-C-conjugate}
$\mathcal{C}$ satisfies P1 and P2 if and only if its conjugates satisfy P1 and P2.
\end{lemma}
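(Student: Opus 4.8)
The plan is to show that each of the two defining properties P1 and P2 is invariant under any permutation of the three coordinate positions; the lemma then follows immediately. Since $\mathcal{C}=\mathcal{C}_{(0,1,2)}$ is itself one of its conjugates, the ``if'' direction is trivial, and all the content lies in the ``only if'' direction. For that direction I would fix an arbitrary $(l_0,l_1,l_2)\in\mathcal{L}$ and prove that $\mathcal{C}_{(l_0,l_1,l_2)}$ satisfies P1 and P2 whenever $\mathcal{C}$ does; as $(l_0,l_1,l_2)$ ranges over $\mathcal{L}$ this covers every conjugate.

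For P1, I would use that the Hamming distance $d(\bm{x},\bm{y})$ counts the coordinates at which $\bm{x}$ and $\bm{y}$ differ, and that rearranging the coordinate positions of both vectors by the same permutation neither creates nor destroys a disagreement. Thus conjugation preserves every pairwise distance, so $d(\mathcal{C}_{(l_0,l_1,l_2)})=d(\mathcal{C})\ge 2$ and P1 passes to the conjugate.

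For P2, the key observation is that the forbidden configuration $\triangle$ in \eqref{fo-r-pda} is symmetric in the three coordinate positions. Concretely, a set of three distinct vectors is a column reordering of a copy of \eqref{fo-r-pda} exactly when, in each of the three coordinates, precisely two of the vectors share a value while the third differs, and the three coinciding pairs (one per coordinate) are the three distinct pairs among the vectors; this description makes no reference to which coordinate plays which role. I would verify that the rows of \eqref{fo-r-pda} realize this pattern (coordinate $0$ pairs the first two columns, coordinate $1$ the last two, coordinate $2$ the outer two), and conversely that any triple exhibiting this pattern can be relabeled into the form \eqref{fo-r-pda}. Since applying $(l_0,l_1,l_2)$ to every vector is a bijection carrying this symmetric pattern to itself in both directions, it maps copies of $\triangle$ to copies of $\triangle$. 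Hence $\mathcal{C}_{(l_0,l_1,l_2)}$ contains a subset of the form \eqref{fo-r-pda} if and only if $\mathcal{C}$ does; as P2 forbids any such subset in $\mathcal{C}$, none occurs in the conjugate either.

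The main obstacle is this P2 step, and within it the genuinely new content is that the \emph{symbol} coordinate is interchangeable with the row and column coordinates. The row$\leftrightarrow$column symmetry is transparent (transposing a PDA yields a PDA), but one must check that reading the pattern \eqref{fo-r-pda} along a permutation that moves the third coordinate still yields a forbidden configuration. The coordinate-free characterization above makes this transparent: because the pattern is described purely as ``in each coordinate two agree and one differs, with the three agreeing pairs distinct,'' it is visibly preserved by every permutation in $\mathcal{L}$, the symbol coordinate included, so no case analysis over the generating transpositions is actually needed.
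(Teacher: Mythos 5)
Your proof is correct. The paper gives no argument for this lemma at all (it is dismissed with ``It is very easy to verify\ldots''), so there is no authorial proof to compare against; your write-up supplies exactly the verification the paper leaves to the reader. The P1 step (Hamming distance is invariant under a uniform permutation of coordinate positions) is immediate, and your handling of P2 is the right way to dispose of the only genuinely nontrivial point, namely that the symbol coordinate is exchangeable with the row and column coordinates: the characterization of $\triangle$ as ``three distinct vectors such that in each coordinate exactly two agree and one differs, with the three agreeing pairs being the three distinct pairs'' does match \eqref{fo-r-pda} exactly (the converse direction requires the small WLOG relabeling you allude to, swapping which of the two coordinate-$0$-agreeing vectors is called first), and it is manifestly symmetric under every permutation in $\mathcal{L}$, so no case analysis over transpositions is needed.
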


%
\begin{theorem}
\label{permutations of PDA}
Let $\mathbf{P}$ be a $(K,F,Z,S)$ PDA for some positive integers $K$, $F$, $Z$ and $S$ with $0< Z<F$. Then
\begin{itemize}
\item[1)] there exists an $(K,S,S-(F-Z),F)$ PDA;
\item[2)] if $\mathbf{P}$ is $g$-regular, then there exist an $(S,F,F-g,K)$ PDA and an $(S,K,K-g,F)$ PDA;
\item[3)] if each row has $h$ integer entries in $\mathbf{P}$, then there exist an $(F,S,S-h,K)$ PDA and an $(F,K,K-h,S)$ PDA.
\end{itemize}
\end{theorem}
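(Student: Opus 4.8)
The plan is to prove all three parts uniformly by passing to the incidence set and taking conjugates, so that essentially no new combinatorial work is needed beyond reading off parameters. First I would let $\mathcal{C}\subseteq[0,F)\times[0,K)\times[0,S)$ be the incidence set of $\mathbf{P}$, whose coordinates record (row, column, value). By Theorem~\ref{the-pda-pda}, $\mathcal{C}$ satisfies P1 and P2, and by Lemma~\ref{le-C-conjugate} every conjugate $\mathcal{C}_{(l_0,l_1,l_2)}$ satisfies P1 and P2 as well; hence, again by Theorem~\ref{the-pda-pda}, each conjugate is the incidence set of \emph{some} PDA. Thus each of the five required arrays will be obtained by choosing the right permutation of coordinates, and the only remaining task is to compute its four parameters (number of columns, number of rows, stars per column, number of distinct integers).

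The bookkeeping rests on two elementary counts. Since $\mathbf{P}$ has $Z$ stars in each of its $K$ columns, the total number of integer entries is $|\mathcal{C}|=K(F-Z)$, and by condition C1-a a fixed integer cannot repeat within a single column, so the $F-Z$ integer entries of any column carry $F-Z$ \emph{distinct} values. For part 1) I would take $\mathcal{C}_{(2,1,0)}$, which swaps the row- and value-coordinates: its array has $S$ rows, $K$ columns, and values in $[0,F)$; each of its columns corresponds to an original column and therefore contains exactly $F-Z$ integers, hence exactly $S-(F-Z)$ stars, giving a $(K,S,S-(F-Z),F)$ PDA. For part 2) the $g$-regularity hypothesis gives $r_s=g$ for every integer $s$; taking $\mathcal{C}_{(0,2,1)}$ and $\mathcal{C}_{(1,2,0)}$ moves the value-coordinate into the column position, so in both resulting arrays a column indexed by an old value $s$ holds exactly $g$ integers, and counting rows ($F$ and $K$ respectively) yields the star counts $F-g$ and $K-g$, producing the $(S,F,F-g,K)$ and $(S,K,K-g,F)$ PDAs. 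Part 3) is symmetric: with $h$ integers in each row, the conjugates $\mathcal{C}_{(2,0,1)}$ and $\mathcal{C}_{(1,0,2)}$ (the latter being the transpose of $\mathbf{P}$) move the row-coordinate into the column position, so each new column has $h$ integers and the star counts are $S-h$ and $K-h$, giving the $(F,S,S-h,K)$ and $(F,K,K-h,S)$ PDAs.

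The one point I would treat carefully is condition C2 together with the claimed count of distinct integers, and I expect this to be the only real obstacle. Condition C2 demands that every column of a conjugate carry the \emph{same} number of integers, which is precisely what each regularity assumption supplies: $Z$ stars per column for part 1), $r_s=g$ for part 2), and $h$ per row for part 3). The number of distinct integers of a conjugate equals the number of original rows, columns, or values that carry at least one integer entry. Because every column of $\mathbf{P}$ already contains $F-Z\ge 1$ integers, the value sets arising from columns are automatically complete, so the counts $K$ (in parts 2) and 3)) and $S$ (the transpose) are exact; the only place warranting an explicit remark is that all $F$ original rows appear as new integers in parts 1) and 2), which holds whenever no row of $\mathbf{P}$ is entirely starred. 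Verifying C2 and this exact integer count for each of the five conjugates is the substantive step, the rest being the mechanical identification of coordinates described above.
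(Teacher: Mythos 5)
Your proposal is correct and follows essentially the same route as the paper: pass to the incidence set, invoke Lemma~\ref{le-C-conjugate} and Theorem~\ref{the-pda-pda} to get that each conjugate is a PDA, and read off the parameters from the column/row/value regularity counts, using exactly the same five conjugates $\mathcal{C}_{(2,1,0)}$, $\mathcal{C}_{(0,2,1)}$, $\mathcal{C}_{(1,2,0)}$, $\mathcal{C}_{(2,0,1)}$, $\mathcal{C}_{(1,0,2)}$. Your closing remark that the integer count $F$ in parts 1) and 2) requires every row of $\mathbf{P}$ to contain at least one integer is a legitimate technicality that the paper's proof glosses over, so if anything you are slightly more careful than the original.
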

\begin{proof}
Let $\mathcal{C}$ be the incidence set of $\mathbf{P}$. Now let us consider the conjugates of $\mathcal{C}$.
\begin{itemize}
\item When $(l_0,l_1,l_2)=(2,1,0)$, $\mathcal{C}_{(2,1,0)}\subseteq [0,S)\times [0,K)\times [0,F)$ satisfies P1 and P2 from Lemma \ref{le-C-conjugate}. Then $\mathbf{P}_{(2,1,0)}$ generated by $\mathcal{C}_{(2,1,0)}$ in \eqref{set-pda} is a $(K,S,F)$ PDA from Theorem \ref{the-pda-pda}. It is easy to check that each integer in $[0,K)$ occurs $F-Z$ times in the second coordinate of $\mathcal{C}$. This implies that each column of $\mathbf{P}_{(2,1,0)}$ has $S-(F-Z)$ stars. So $\mathbf{P}_{(2,1,0)}$ is a $(K,S,S-(F-Z),F)$ PDA.
\item When $(l_0,l_1,l_2)=(0,2,1)$, $\mathcal{C}_{(0,2,1)}\subseteq [0,F)\times [0,S)\times [0,K)$ satisfies P1 and P2 from Lemma \ref{le-C-conjugate}. Then $\mathbf{P}_{(0,2,1)}$ generated by $\mathcal{C}_{(0,2,1)}$ in \eqref{set-pda} is an $(S,F,K)$ PDA. If $\mathbf{P}$ is $g$-regular, i.e., each integer, say $s\in [0,S)$ occurs $g$ times in $\mathbf{P}$, then $s$ occurs $g$ times in the third coordinate of $\mathcal{C}$. This implies that each column of $\mathbf{P}_{(0,2,1)}$ has $F-g$ stars. So $\mathbf{P}_{(0,2,1)}$ is an $(S,F,F-g,K)$ PDA. Similarly we can show that $\mathbf{P}_{(1,2,0)}$ is an $(S,K,K-g,F)$ PDA.
\item When $(l_0,l_1,l_2)=(2,0,1)$, $\mathcal{C}_{(2,0,1)}\subseteq [0,S)\times [0,F)\times [0,K)$ satisfies P1 and P2 from Lemma \ref{le-C-conjugate}. Then
  $\mathbf{P}_{(2,0,1)}$ generated by $\mathcal{C}_{(2,0,1)}$ in \eqref{set-pda} is an $(F,S,K)$ PDA. If each row has $h$ integer entries in $\mathbf{P}$, i.e., each integer, say $f\in [0,F)$ occurs $h$ times in the first coordinate of $\mathcal{C}$, then each column of $\mathbf{P}_{(2,0,1)}$ has $h$ integers. That is, each column has $S-h$ stars. So $\mathbf{P}_{(2,0,1)}$ is an $(F,S,S-h,K)$ PDA. Similarly we can show that $\mathbf{P}_{(1,0,2)}$ is an $(F,K,K-h,S)$ PDA.
\end{itemize}
\end{proof}

From Theorems \ref{th-AN-Y} and \ref{permutations of PDA}, the following result can be obtained.
\begin{theorem}
\label{th-permutation of Ali}
For any positive integers $k$ and $t$ with $0<t<k-1$, we have the following PDAs.
\begin{itemize}
\item[(a)] $(t+1)$-$(k,{k\choose t},{k-1\choose t-1},{k\choose t+1})$ PDA with $t$ stars in each row;
\item[(b)] $(k,{k\choose t+1},{k-1\choose t+1},{k\choose t})$ PDA;
\item[(c)] $({k\choose t+1},{k\choose t},{k\choose t}-(t+1),k)$ PDA;
\item[(d)] $({k\choose t+1},k,k-(t+1),{k\choose t})$ PDA;
\item[(e)] $({k\choose t},{k\choose t+1},{k\choose t+1}-(k-t),k)$ PDA;
\item[(f)] $({k\choose t},k,t,{k\choose t+1})$ PDA.
\end{itemize}
\end{theorem}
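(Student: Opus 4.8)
The plan is to produce all six PDAs from a single source object---the MN PDA of Theorem \ref{th-AN-Y}---by recording its structural invariants and then invoking the three conjugation operations of Theorem \ref{permutations of PDA} one after another. First I would observe that item (a) is literally Theorem \ref{th-AN-Y} with $K$ written as $k$: the $(t+1)$-$(k,\binom{k}{t},\binom{k-1}{t-1},\binom{k}{t+1})$ PDA having $t$ stars in every row. The only preparation is to read off the two data that Theorem \ref{permutations of PDA} consumes. Because this PDA is $(t+1)$-regular, its regularity parameter is $g=t+1$; because every row has $k$ entries of which exactly $t$ are stars, every row carries $h=k-t$ integer entries.

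With $(K,F,Z,S)=(k,\binom{k}{t},\binom{k-1}{t-1},\binom{k}{t+1})$, $g=t+1$, and $h=k-t$ fixed, each of (b)--(f) is a direct substitution. Part 1) of Theorem \ref{permutations of PDA} gives a $(K,S,S-(F-Z),F)$ PDA, namely (using $F-Z=\binom{k-1}{t}$) a $(k,\binom{k}{t+1},\binom{k}{t+1}-\binom{k-1}{t},\binom{k}{t})$ PDA; the Pascal identity $\binom{k}{t+1}-\binom{k-1}{t}=\binom{k-1}{t+1}$ turns this into (b). Part 2), which applies since the source is $g$-regular, yields the $(S,F,F-g,K)$ and $(S,K,K-g,F)$ PDAs, which are exactly (c) and (d). Part 3), which applies since each row has $h$ integer entries, yields the $(F,S,S-h,K)$ and $(F,K,K-h,S)$ PDAs; in the latter $K-h=k-(k-t)=t$, collapsing it to the $(\binom{k}{t},k,t,\binom{k}{t+1})$ PDA of (f), while the former is (e).

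I do not expect a genuine obstacle, since the substance of the argument is carried entirely by Theorem \ref{permutations of PDA}, and what remains is bookkeeping plus a few applications of Pascal's rule $\binom{k}{t+1}=\binom{k-1}{t}+\binom{k-1}{t+1}$ in the star-count simplifications for (b) and (e). The points deserving care are: confirming that the MN PDA indeed has regularity $g=t+1$ and per-row integer count $h=k-t$, both of which are recorded in Theorem \ref{th-AN-Y}; and checking that the hypothesis $0<t<k-1$ keeps each resulting parameter set nondegenerate, i.e. $0<Z<F$. The binding cases are (b) and (d), whose star counts $\binom{k-1}{t+1}$ and $k-t-1$ are positive precisely when $t<k-1$, so the stated range on $t$ is exactly what is needed for all six objects to be honest PDAs.
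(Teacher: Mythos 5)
Your proposal is correct and follows exactly the route the paper takes: item (a) is the MN PDA of Theorem \ref{th-AN-Y}, and items (b)--(f) are obtained by substituting its parameters $(K,F,Z,S)=(k,\binom{k}{t},\binom{k-1}{t-1},\binom{k}{t+1})$, $g=t+1$, $h=k-t$ into parts 1), 2), 3) of Theorem \ref{permutations of PDA}, with Pascal's rule cleaning up the star count in (b). The paper states this derivation in one line; your write-up simply supplies the same bookkeeping explicitly.
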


For $0<t'<k-1$, let $t=k-t'-1$. Clearly $0<t<k-1$. Applying Theorem \ref{th-permutation of Ali}-(a), \ref{th-permutation of Ali}-(c) and \ref{th-permutation of Ali}-(f) to $t=k-t'-1$, we can obtain $(k,{k\choose t+1},{k-1\choose t+1},{k\choose t})$ PDA, $({k\choose t},{k\choose t+1},{k\choose t+1}-(k-t),k)$ PDA and $({k\choose t+1},k,k-(t+1),{k\choose t})$ PDA, i.e., Theorem \ref{th-permutation of Ali}-(b), \ref{th-permutation of Ali}-(d) and \ref{th-permutation of Ali}-(e). So we only need to consider the PDAs in Theorem \ref{th-permutation of Ali}-(a), \ref{th-permutation of Ali}-(c) and \ref{th-permutation of Ali}-(f).

The subcase (a), i.e., MN PDA, in Theorem \ref{th-permutation of Ali} has been full discussed. Now let us consider the subcases (c) and (f). In the following, we will show that the PDAs in Theorem \ref{th-permutation of Ali}-(c) and \ref{th-permutation of Ali}-(f) are Pareto-optimal by the two lower bounds on the value of $S$ in Section \ref{Lower bounds}.



\section{Two Pareto-optimal PDAs}
\label{se-other two good PDAs}
We first consider the PDAs in Theorem \ref{th-permutation of Ali}-(c) and \ref{th-permutation of Ali}-(f) using the lower bound on the value of $S$ respectively. Then we show that there exactly exists a tradeoff between rate $R$ and $F$ of the code caching schemes generated by the PDAs for some parameters.

\subsection{The first Pareto-optimal PDA}
\label{the first optimal PDA}

From Theorem \ref{th-digui-bound} and Theorem \ref{th-permutation of Ali}-(c), the following result can be obtained.

\begin{theorem}
\label{th-r1-ali}
The $({k\choose t+1},{k\choose t},{k\choose t}-(t+1),k)$ PDA $\mathbf{P}_1$ in Theorem \ref{th-permutation of Ali}-(c) is Pareto-optimal.
\end{theorem}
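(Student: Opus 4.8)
The plan is to show that the PDA $\mathbf{P}_1$ of Theorem \ref{th-permutation of Ali}-(c), which has parameters $K = {k\choose t+1}$, $F = {k\choose t}$, $Z = {k\choose t}-(t+1)$ and $S = k$, meets the recursive lower bound of Theorem \ref{th-digui-bound}. Since any PDA realizing the same $(K, F, Z)$ must have $S$ at least as large as the right-hand side of \eqref{eq-digui-bound}, it suffices to verify that for these specific parameters the bound evaluates to exactly $k$, i.e. to the value of $S$ that $\mathbf{P}_1$ already achieves. That would show no PDA with the same $K$, $F$, $Z$ can have smaller $S$, hence $\mathbf{P}_1$ is Pareto-optimal.

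First I would substitute the parameters into the simplified bound \eqref{eq-digui-bound-Lower}. With $F - Z = t+1$ and $Z + 1 = {k\choose t} - t$, the cleaner estimate there gives
\begin{eqnarray*}
S \geq \frac{(F-Z)K}{Z+1} = \frac{(t+1){k\choose t+1}}{{k\choose t}-t}.
\end{eqnarray*}
The identity $(t+1){k\choose t+1} = (k-t){k\choose t}$ turns the numerator into $(k-t){k\choose t}$, so the bound reads $(k-t){k\choose t}/\left({k\choose t}-t\right)$, which is slightly below $k-t$ and thus too weak to pin down $S=k$ on its own. This tells me the coarse bound \eqref{eq-digui-bound-Lower} will not be enough, and I must work with the \emph{exact} ceiling-based recursion \eqref{eq-digui-bound} of Theorem \ref{th-digui-bound}, tracking the rounding carefully rather than dropping the ceilings.

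The main computation, then, is to evaluate the nested-ceiling sum in \eqref{eq-digui-bound} with $F = {k\choose t}$ and $F - Z = t+1$ terms. The first term is $\left\lceil (t+1){k\choose t+1}/{k\choose t}\right\rceil = \left\lceil (k-t)(t+1) \cdot \tfrac{1}{t+1}\right\rceil$-style simplification; I would compute each of the $t+1$ levels in closed form, showing the partial quantity passed down the recursion stays an integer multiple that accumulates to $k$. The key arithmetic fact I expect to use repeatedly is the column-ratio identity $(j)\big/(Z+j) \cdot (\text{running value})$ telescoping, analogous to the telescoping displayed just after \eqref{eq-digui-bound-Lower} where the product collapses to $F/(Z+1)$; here I want the dual collapse producing $k$. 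I would verify the base case — that the recursion terminates at a $(k, Z, Z, 0)$-type subarray consistent with the stated endpoint — and then confirm the accumulated sum equals $k$ exactly.

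The hard part will be handling the ceilings at each of the $t+1$ levels: the coarse bound loses the rounding, so I must argue that at every stage the argument of $\lceil\cdot\rceil$ is \emph{already} an integer (or controllably rounds up) so that the nested ceilings contribute no slack and the total lands on $k$ rather than strictly below it. If the quantities turn out to be exact integers at each level — which the binomial identities strongly suggest — the sum telescopes cleanly and the ceilings are harmless; if not, I would need a more delicate level-by-level induction bounding the rounding error and showing it sums to the required extra amount. Once the exact bound is shown to equal $k$, Pareto-optimality of $\mathbf{P}_1$ follows immediately, since it attains this value.
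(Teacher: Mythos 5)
Your plan establishes only half of what Pareto-optimality requires, and the missing half is the part that needs a different tool. Showing that the nested-ceiling bound \eqref{eq-digui-bound} evaluates to exactly $k$ when $K={k\choose t+1}$, $F={k\choose t}$, $Z={k\choose t}-(t+1)$ does rule out a PDA with the \emph{same} $F$ and $Z$ but smaller $S$ (and your arithmetic would indeed go through: the first term is $\left\lceil (t+1){k\choose t+1}/{k\choose t}\right\rceil=k-t$ exactly, and each of the remaining $t$ terms has argument in $(0,1]$, so each ceiling contributes $1$, for a total of $k$). But Pareto-optimality is a statement about the $(R,F)$ trade-off at fixed $K$ and $M/N=Z/F$: you must also exclude a PDA with $F'<{k\choose t}$, $Z'=F'\bigl(1-\tfrac{t+1}{{k\choose t}}\bigr)$, and rate $R'\le k/{k\choose t}$, since such a point would dominate $\mathbf{P}_1$. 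Your argument, which you explicitly restrict to ``the same $(K,F,Z)$,'' says nothing about this case, so the concluding ``hence $\mathbf{P}_1$ is Pareto-optimal'' is a non sequitur.

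The paper closes this gap with two separate arguments. For $F'<{k\choose t}$ it does not evaluate the full recursion but truncates it: the first term gives $k-t$ and each of the remaining $F'-Z'-1$ terms is bounded below by $1$, yielding $S\ge k-t-1+\tfrac{F'(t+1)}{{k\choose t}}$ and hence $R=S/F'>k/{k\choose t}$ strictly whenever $F'<{k\choose t}$. For the converse direction (ruling out $R<k/{k\choose t}$ with $F'\le{k\choose t}$) it switches to the \emph{first} lower bound, Theorem \ref{th-lower bound}, applied not to $\mathbf{P}'$ itself but to its conjugate $\mathbf{P}'_{(0,2,1)}$ obtained by permuting coordinates of the incidence set; this is a genuinely different mechanism that your plan does not touch. (A small additional slip: $\frac{(t+1){k\choose t+1}}{{k\choose t}-t}=\frac{(k-t){k\choose t}}{{k\choose t}-t}$ is slightly \emph{above} $k-t$, not below it, though your conclusion that the ceiling-free bound \eqref{eq-digui-bound-Lower} is too weak to reach $k$ is correct.) To repair your proof you would need either to run your exact-ceiling evaluation uniformly over all admissible $F'\le{k\choose t}$ and show the resulting $S/F'$ exceeds $k/{k\choose t}$ whenever $F'<{k\choose t}$, or to import the paper's conjugate-PDA argument for the second direction.
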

\begin{proof}
From Theorem \ref{thm1}, a coded caching scheme with $K={k\choose t+1}$, $M/N=1-\frac{t+1}{{k\choose t}}$, $F={k\choose t}$ and $R=\frac{k}{{k\choose t}}$ can be obtained by $\mathbf{P}_1$. First for any positive integer $F'< {k\choose t}$, assume that there exists a $(K,F',Z,S)$ PDA with $Z=F'(1-\frac{t+1}{{k\choose t}})$. We will show that the corresponding rate $R=S/F'> k/{k\choose t}$. According to \eqref{eq-digui-bound}
\begin{eqnarray*}
S\geq  \left\lceil\frac{(F'-Z)K}{F'}\right\rceil +F'-Z-1=\left\lceil\frac{\frac{F'(t+1)}{{k\choose t}}{k \choose t+1}}{F'}\right\rceil+\frac{F'(t+1)}{{k\choose t}}-1=k-t-1+\frac{F'(t+1)}{{k\choose t}}.
\end{eqnarray*}
So we have
$$R=\frac{S}{F'}\geq \frac{k-t-1+\frac{F'(t+1)}{{k\choose t}}}{F'}=\frac{k-t-1}{F'}+\frac{t+1}{{k\choose t}}>\frac{k}{{k\choose t}}.$$

Conversely we claim that for any positive real number $R<\frac{k}{{k\choose t}}$, the corresponding $F'>{k\choose t}$ always holds.  Assume that there exists a $({k\choose t+1},F',F'(1-\frac{t+1}{{k\choose t}}),K')$ PDA $\mathbf{P}'$ with $F'\leq {k\choose t}$ and $\frac{K'}{F'}<\frac{k}{{k\choose t}}$. Let $x=F'\frac{t+1}{{k\choose t}}$, i.e., $F'=x\frac{{k\choose t}}{t+1}$. It is easy to check that
the number of integer entries in $\mathbf{P}'$ is
$$n=\left(F'-F'\left(1-\frac{t+1}{{k\choose t}}\right)\right){k\choose t+1}=F'(k-t)={k \choose t+1}x.$$
Let $\mathcal{C}'$ be the incidence set of $\mathbf{P}'$. Then $\mathcal{C}'_{(0,2,1)}\subseteq [0,F')$ $\times$ $[0,K')$ $\times$ $[0,{k\choose t+1})$ satisfies P1 and P2 from Lemma \ref{le-C-conjugate}. So $\mathbf{P}'_{(0,2,1)}$ generated by $\mathcal{C}'_{(0,2,1)}$ in \eqref{set-pda} is a $(K',F',{k\choose t+1})$ PDA from Theorem \ref{the-pda-pda}. From \eqref{eq-Lowebound-K<F} we have
\begin{eqnarray}
\label{eq-P_1-1}
{k \choose t+1}\geq S'\geq\left\lceil\frac{nF'}{K'F'+F'-n}\right\rceil=
\left\lceil\frac{{k \choose t+1}xF'}{K'F'+F'-F'(k-t)}\right\rceil\geq
{k \choose t+1}\frac{x}{K'+1-(k-t)}.
\end{eqnarray}
So we have $\frac{x}{K'+1-(k-t)}\leq 1$. Clearly we have $K'+1-(k-t)>0$ since $K'F'+F'-n>0$. Then the following inequality holds
\begin{eqnarray}
\label{eq-P_1-2}
K'-k+(t+1)\geq x\geq 1.
\end{eqnarray}
From hypothesis $$\frac{k}{{k\choose t}}> \frac{K'}{F'}=\frac{K'}{\frac{x{k\choose t}}{t+1}}$$ we have
\begin{eqnarray*}
\frac{k}{t+1}> \frac{K'}{x}\geq \frac{K'}{K'-k+(t+1)}.
\end{eqnarray*}
Then $$\frac{t+1}{k}<  \frac{K'-k+(t+1)}{K'}.$$
That is $$\frac{k-t-1}{K'}< \frac{k-t-1}{k}.$$
This implies $K'> k$ since $t<k-1$. So we have $R=\frac{K'}{F'}>\frac{k}{{k\choose t}}$, a contradiction to our hypothesis $\frac{K'}{F'}< \frac{k}{{k\choose t}}$. Then we have $F> {k\choose t}$ if $R<\frac{k}{{k\choose t}}$.
\end{proof}
From Theorem \ref{thm1}, the parameters of the scheme generated by $\mathbf{P}_1$ can be obtain as follows.
\begin{eqnarray}\label{eq-p1-parameters}
\frac{M_1}{N_1}=1-\frac{t+1}{{k\choose t}},\ \ K_1={k\choose t+1},\ \ F_1={k\choose t},\ \ R_1=\frac{k}{{k\choose t}}.
\end{eqnarray}
From Theorem \ref{th-r1-ali}, this scheme has not only the low rate for the fixed parameters $K_1$, $\frac{M_1}{N_1}$ and $F_1$, but also the low packet number for the fixed parameters $K_1$, $\frac{M_1}{N_1}$ and $R_1$. In the following, by comparing the performance with MN PDA, we will show that there exits a tradeoff between $F$ and $R$ for the fixed parameters $K_1$ and $\frac{M_1}{N_1}$. When $K={k\choose t+1}$, $\frac{M}{N}=1-\frac{t+1}{{k\choose t}}$, from Theorem \ref{th-AN-Y} we have an MN PDA $(K$, $F_{MN}$, $Z_{MN}$, $S_{MN})$ PDA, where
$$F_{MN}={{k\choose t+1}\choose {k\choose t+1}(1-\frac{t+1}{{k\choose t}})}={{k\choose t+1}\choose k-t} \ \ \ \ \hbox{and}\ \ \ \ \ S_{MN}={{k\choose t+1}\choose {k\choose t+1}(1-\frac{t+1}{{k\choose t}})+1}={{k\choose t+1}\choose k-t-1}.$$
Then $$R_{MN}=\frac{S_{MN}}{F_{MN}}=\frac{k-t}{\binom{k}{t+1}-k+t+1}.$$
From \eqref{eq-p1-parameters}, we have
\begin{eqnarray}
\label{eq-r1-ali}
\frac{F_1}{F_{MN}} =\frac{{k\choose t}}{{{k\choose t+1}\choose k-t}}\leq\left(\frac{k}{{k\choose t+1}}\right)^{k-t}\ \ \ \ \hbox{and}\ \ \ \ \frac{R_1}{R_{MN}} =\frac{k}{t+1}-\frac{k}{{k\choose t}}+\frac{k}{(k-t){k\choose t}}.
\end{eqnarray}
The first item in \eqref{eq-r1-ali} is derived by the following fact.
\begin{eqnarray*}
\frac{{k\choose t}}{{{k\choose t+1}\choose k-t}}&=&\frac{\frac{k(k-1)\ldots(t+1)}{(k-t)!}}{\frac{{k\choose t+1}({k\choose t+1}-1)\ldots({k\choose t+1}-k+t+1)}{(k-t)!}}=\frac{k(k-1)\ldots(t+1)}{{k\choose t+1}({k\choose t+1}-1)\ldots({k\choose t+1}-k+t+1)}\\
&=&\frac{k}{{k\choose t+1}}\frac{k-1}{{k\choose t+1}-1}\ldots\frac{t+1}{{k\choose t+1}-k+t+1}\\
&\leq&\left(\frac{k}{{k\choose t+1}}\right)^{k-t}
\end{eqnarray*}
The last inequality of the above formula holds due to $\frac{k}{{k\choose t+1}}\geq\frac{k-x}{{k\choose t+1}-x}$ for any positive integer $x\in[1,k-t)$.
\begin{remark}
\label{re4}
MN PDA with the parameters $K_1$ and $\frac{M_1}{N_1}$ in \eqref{eq-p1-parameters} could achieve the minimum rate, but its $F$ is at least $F={{k\choose t+1}\choose k-t}$. By \eqref{eq-r1-ali}, if $F$ reduces by more than $({k\choose t+1}/{k})^{k-t}$ times, the rate must increase at least $\frac{k}{t+1}-\frac{k}{{k\choose t}}+\frac{k}{(k-t){k\choose t}}$ times. In other words, if $R$ increases by a factor of $\frac{k}{t+1}$ times, then $F$ could decrease by more than $({k\choose t+1}/{k})^{k-t}$ times.
\end{remark}
\begin{example}
Let $t=k-3$. By \eqref{eq-r1-ali} we have
\begin{eqnarray*}
\frac{F}{F_{MN}}=\frac{8k-16}{\left((k-1) k-4\right) \left((k-1) k-2\right)}\ \ \ \hbox{and} \ \ \
\frac{R}{R_{MN}}=\frac{k}{k-2}-\frac{4}{(k-1)(k-2)}.
\end{eqnarray*}
According to above formula, the following table can be obtained.
\begin{eqnarray*}
\begin{array}{|c|c|c|c|c|c|c|c|c|c|c|c|c|c|c|c|c|c|c|c|}
\hline
k                 & 5 &  6 &  7 &  8 &  9 & 10 & 11 & 12 &  13 & 14  &  15\\ \hline
K                 & 10 & 20 & 35 & 56 & 84 & 120 & 165 & 220 & 286 & 364 & 455\\ \hline
\frac{F}{F_{MN}} & 0.194 & 0.088 & 0.047 & 0.028 & 0.018 & 0.013 & 0.009 & 0.007 & 0.005 & 0.004  & 0.003\\ \hline
\frac{R}{R_{MN}} & 1.333 & 1.3   & 1.267 & 1.238 & 1.214 &1.194  & 1.178 & 1.164 & 1.152 & 1.141 & 1.132\\
\hline
\end{array}
\end{eqnarray*}
\end{example}
\subsection{The second optimal PDA}
\label{the second optimal PDA}

\begin{theorem}
\label{th-r2-ali}
The $({k\choose t},k,t,{k\choose t+1})$ PDA $\mathbf{P}_2$ in Theorem \ref{th-permutation of Ali}-(f) is Pareto-optimal.
\end{theorem}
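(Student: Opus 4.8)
The plan is to mirror the structure of the proof of Theorem~\ref{th-r1-ali}, proving Pareto-optimality by establishing two complementary statements about the scheme generated by $\mathbf{P}_2$, whose parameters (via Theorem~\ref{thm1}) are $K_2=\binom{k}{t}$, $M_2/N_2=t/k$, $F_2=k$, and $R_2=\binom{k}{t+1}/k$. First I would fix the parameters $K=\binom{k}{t}$ and $M/N=t/k$ and show that \emph{no} PDA with fewer rows than $F_2=k$ can match or beat the rate $R_2$; then conversely I would show that any PDA achieving a rate strictly below $R_2$ must have $F>k$. Together these say that $\mathbf{P}_2$ sits on the rate-versus-$F$ tradeoff boundary, which is exactly the definition of Pareto-optimal used in the paper.

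For the first direction, suppose there is a $(K,F',Z,S)$ PDA with $F'<k$ and $Z=F'(t/k)$ so that the cache ratio is preserved. Here $F'<k=K$ is the regime $F<K$, and I would invoke the recursive lower bound of Theorem~\ref{th-digui-bound} (specifically inequality~\eqref{eq-digui-bound}) to bound $S$ from below. The number of integers per column is $F'-Z=F'(1-t/k)=F'(k-t)/k$, and substituting into the first one or two ceiling terms of \eqref{eq-digui-bound} should yield $S\ge \lceil (F'-Z)K/F'\rceil + (\text{lower-order terms})$. Dividing by $F'$ and simplifying, I expect to obtain $R=S/F' > \binom{k}{t+1}/k = R_2$, exactly as the $F'>\binom{k}{t}$ argument worked in Theorem~\ref{th-r1-ali}; the arithmetic will hinge on $(F'-Z)K/F' = (k-t)\binom{k}{t}/k = \binom{k}{t+1}\cdot(k-t)/\big((t+1)\cdots\big)$-type identities, so I would verify the ceiling evaluates cleanly.

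For the converse, I would assume a $(\binom{k}{t},F',F'(t/k),K')$ PDA $\mathbf{P}'$ with $F'\le k$ and rate $K'/F'<\binom{k}{t+1}/k$, and derive a contradiction exactly as in Theorem~\ref{th-r1-ali}: pass to the conjugate $\mathcal{C}'_{(0,2,1)}$ via Lemma~\ref{le-C-conjugate}, which by Theorem~\ref{the-pda-pda} gives a $(K',F',\binom{k}{t})$ PDA, then apply the first lower bound \eqref{eq-Lowebound-K<F} with $n=(F'-Z)\binom{k}{t}=F'(k-t)$ to force an inequality between $K'$ and $k$. Setting $x=F't/k$ and chasing the resulting inequalities should produce $K'>k$ and hence $R=K'/F'>\binom{k}{t+1}/k$, contradicting the hypothesis. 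The main obstacle I anticipate is matching the two lower bounds to the correct conjugate orientation: unlike case~(c), in case~(f) the roles of $F$, $K$, $S$ are permuted differently (here $F_2=k$ is small while $S_2=\binom{k}{t+1}$ is large), so I must be careful which coordinate plays the ``row count'' role in each application, and ensure that the regime ($F<K$ versus $F>K$) assumed by each lower bound actually holds after conjugation. Getting the ceiling terms in \eqref{eq-digui-bound} to collapse to a clean closed form for these specific binomial parameters is the delicate computational step.
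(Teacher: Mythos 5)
Your high-level plan (pin $\mathbf{P}_2$ to the rate-versus-$F$ boundary via the lower bounds of Section \ref{Lower bounds}) is the right idea, but both of your concrete steps have genuine gaps, and the paper's actual proof is far shorter. The key fact is the telescoped form \eqref{eq-digui-bound-Lower} of the second bound, i.e.\ $S\ge (F-Z)K/(Z+1)$, equivalently \eqref{eq-tradeoff}: $R\ge K(1-M/N)/(FM/N+1)$. This right-hand side is strictly decreasing in $F$ and is met with \emph{equality} by $\mathbf{P}_2$, since ${k\choose t}\cdot\frac{k-t}{k}\cdot\frac{1}{t+1}={k\choose t+1}/k$. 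That single observation already gives everything: $F'<k$ forces $R>R_2$, and $R<R_2$ forces $F>k$. Your forward direction, as literally described (keeping only the first one or two ceiling terms of \eqref{eq-digui-bound} and counting the rest as at least $1$ each, the trick from Theorem \ref{th-r1-ali}), does not suffice here: for $k=10$, $t=2$, $F'=5$, $Z=1$ it yields only $S\ge \lceil 4\cdot 45/5\rceil+3=39$, i.e.\ $R\ge 39/5=7.8$, well below $R_2={10\choose 3}/10=12$, whereas the full telescoped bound gives $S\ge 4\cdot 45/2=90$ and $R\ge 18$. In general the truncation fails whenever $F'>t+1$; you must keep the entire product-sum.

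Your converse is both redundant (it is the contrapositive of the forward direction once \eqref{eq-tradeoff} is in hand) and, as proposed, would fail. Note first the arithmetic slip: with $Z=F't/k$ the number of integer entries is $n=\frac{F'(k-t)}{k}{k\choose t}$, not $F'(k-t)$ --- you carried over the case-(c) computation, where the extra factor happens to cancel. More seriously, feeding the conjugate $(K',F',{k\choose t})$ PDA into the first lower bound \eqref{eq-Lowebound-K<F} only yields $K'\ge \frac{(k-t)({k\choose t}+F')}{k}-1$, hence $R\ge \frac{k-t}{k}+\frac{(k-t){k\choose t}}{kF'}-\frac{1}{F'}$; for $k=6$, $t=2$, $F'=6$ this is $13/6$, strictly below $R_2=10/3$, so no contradiction can be extracted. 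The first lower bound essentially only recovers an $F$-independent floor near the MN rate and cannot force $F>k$; unlike in case (c), the conjugation trick does not transfer to case (f), and the converse must come from the second (recursive) bound --- where, as above, it is automatic from monotonicity in $F$.
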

\begin{proof} From Theorem \ref{thm1}, a coded caching scheme with $K={k\choose t}$, $M/N=t/k$, $F=k$ and $R=\frac{{k\choose t+1}}{k}$ can be obtained by $\mathbf{P}_2$. Clearly $F\leq K$. So we only need to consider the relationship between $R$ and $F$ in \eqref{eq-tradeoff}. It is easy to check that \eqref{eq-tradeoff} is a monotonic decreasing function of $F$. So we only need to check whether the equality in \eqref{eq-tradeoff} holds for the parameters $K$, $F$, $M/N$ and $R$. Then the result directly follows from \eqref{eq-tradeoff}, i.e.,
\begin{eqnarray*}
R=\frac{{k\choose t+1}}{k}\geq\frac{K(1-M/N)}{FM/N+1}=\frac{{k\choose t}\frac{k-t}{k}}{k\frac{t}{k}+1}=\frac{{k\choose t+1}}{k}.
\end{eqnarray*}
\end{proof}
From Theorem \ref{thm1}, the parameters of the scheme generated by $\mathbf{P}_2$ can be obtain as follows.
\begin{eqnarray}\label{eq-p2-parameters}
\frac{M_2}{N_2}=\frac{t}{{k}},\ \ K_2={k\choose t},\ \ F_2=k,\ \ R_2=\frac{{k\choose t+1}}{k}.
\end{eqnarray}
From Theorem \ref{th-r2-ali}, this scheme has not only the low rate for the fixed parameters $K_2$, $\frac{M_2}{N_2}$ and $F_2$, but also the low packet number for the fixed parameters $K_2$, $\frac{M_2}{N_2}$ and $R_2$. Similarly by comparing the performance with MN PDA in the above subsection, we will also show that there exactly exists a tradeoff between $F$ and $R$ for the fixed parameters $K_2$ and $\frac{M_2}{N_2}$ in \eqref{eq-p2-parameters}. When $K={k\choose t}$ and $\frac{M}{N}=\frac{t}{k}$, from Theorem \ref{th-AN-Y} we have an MN PDA $(K$, $F_{MN}$, $Z_{MN}$, $S_{MN})$ PDA $\mathbf{P}$, where
$$F_{MN}={{k\choose t}\choose {k\choose t}\frac{t}{k}}={{k\choose t}\choose {k-1\choose t-1}}\ \ \ \ \hbox{and}\ \ \ \ S_{MN}={{k\choose t}\choose {k\choose t}\frac{t}{k}+1}.$$
Then
$$ R_{MN}=\frac{S_{MN}}{F_{MN}}=\frac{k-t}{t{k\choose t}+k}{k\choose t}.$$
From \eqref{eq-p2-parameters}, we have
\begin{eqnarray}
\label{eq-r2-ali}
\frac{F_2}{F_{MN}} =\frac{k}{{{k\choose t}\choose {k-1\choose t-1}}}\leq k\left(\frac{t}{k}\right)^{{k-1\choose t-1}}\ \ \ \ \hbox{and}\ \ \ \ \frac{R_2}{R_{MN}} =\frac{{k \choose t+1}/{k}}{\frac{k-t}{t{k\choose t}+k}{k\choose t}}=\left({k-1\choose t-1}+1\right)\frac{1}{t+1}
\end{eqnarray}
where the first inequality holds due to $\frac{{k\choose t}-x}{{k-1\choose t-1}-x}\geq\frac{{k\choose t}}{{k-1\choose t-1}}=   \frac{k}{t}$ for any integer $x\in [1,{k-1\choose t-1})$.
\begin{remark}
MN PDA with the parameters $K_2$ and $\frac{M_2}{N_2}$ in \eqref{eq-p2-parameters} could achieve the minimum rate, but its $F$ is at least ${{k\choose t}\choose {k-1\choose t-1}}$. From \eqref{eq-r2-ali}, if $F$ reduces by more than $(\frac{k}{t})^{{k-1\choose t-1}}$ times, the rate must increase at least ${k-1\choose t-1}$ times. In other words, if $R$ increases about ${k-1\choose t-1}$ times, then $F$ could decrease by more than $(\frac{k}{t})^{{k-1\choose t-1}}$ times.
\end{remark}
\begin{example}
When $t=2$, by \eqref{eq-r2-ali}, we have
$$
\frac{F_2}{F_{MN}}=\frac{k}{{{k\choose t}\choose {k-1\choose t-1}}}=\frac{k}{{{k\choose 2}\choose k-1}}\leq\frac{k}{(\frac{k}{2})^{k-1}}\ \ \ \ \hbox{and}\ \ \ \ \frac{R_2}{R_{MN}} =\frac{k}{3},
$$
where the first item holds due to $(\frac{m}{l})^{l}\le{m\choose l}$ for any positive integers $m$ and $l$ with $l\leq m$. Then the following table can be obtained.
\begin{eqnarray*}
\begin{array}{|c|c|c|c|c|c|c|c|c| }
\hline
k& 4     &5     &    6 & 7    &8    &  9 &  10   \\ \hline
\frac{F_2}{F_{MN}}
 &0.2000&2.3810\cdot 10^{-2}&1.9980\cdot 10^{-3}&1.2900\cdot 10^{-4}&6.7565\cdot 10^{-6}&2.9742\cdot 10^{-7}&1.1285\cdot 10^{-8}\\ \hline
\frac{R_2}{R_{MN}}
  &1.3  &1.7  &2.0 &2.3    &2.7   &3.0      &3.3 \\ \hline
\end{array}
\end{eqnarray*}
\end{example}

\section{The performance of previously known PDAs}
\label{sec-comparison}
In fact, comparing the two Pareto-optimal PDAs in the above section, we can estimate the performance of some previously known results. In this section, we will have the following statements.
\begin{itemize}
\item {\bf Comparison 1:} For the same parameters $K$, $M/N$ and $R$, the packet number of the scheme generated by $\mathbf{P}_2$ is far less than that of PDA in Lemma \ref{lem-shanmugam}. This implies that the PDA in Lemma \ref{lem-shanmugam} is not Pareto-optimal for some $K$ and $M/N$;
\item {\bf Comparison 2:} The packet number of the scheme generated by $\mathbf{P}_1$ significantly decrease by increasing a little delivery rate $R$ comparing with that of the PDA in Lemma \ref{lem-yan};
\item {\bf Comparison 3:} For some parameters $K$ and $M/N$, the performance of $\mathbf{P}_1$ is better than that of the PDA in Lemma \ref{lem-shang}, i.e., both the rate and the packet number of the scheme generated by $\mathbf{P}_1$ are smaller than that of the PDA in Lemma \ref{lem-shang}. This implies that the PDA in Lemma \ref{lem-shang} is not Pareto-optimal for some $K$ and $M/N$;
\end{itemize}
Now let us introduce above three comparisons one by one.
\subsection{Comparison 1}
For any positive integers $K$ and $K'$ with $K'|K$, let $m=K/K'$. From Lemma \ref{lem-shanmugam}, we have a $(K,{K' \choose KM/N},{K'-1 \choose KM/N-1}$, $m{K'\choose KM/N+1})$ PDA. From Theorem \ref{thm1},
\begin{eqnarray*}
F'={K'\choose K'M/N} \ \ \ \ \ \hbox{and}\ \ \ \ R'=\frac{m{K'\choose KM/N+1}}{{K' \choose KM/N}}=\frac{K'(1-\frac{M}{N})}{1+K'\frac{M}{N}}\frac{K}{K'}.
\end{eqnarray*}

For any positive integers $k$ and $t$, now let us consider the performance of the PDA in Lemma \ref{lem-shanmugam} and Theorem \ref{th-r2-ali}.  Assume
$$K={k\choose t}\ \ \ \ \ \hbox{and}\ \ \ \ \frac{M}{N}=\frac{t}{k}.$$
From Theorems \ref{thm1} and \ref{th-r2-ali}, we have
\begin{eqnarray*}
F_2=k \ \ \ \ \ \hbox{and}\ \ \ \ R_2=\frac{{k\choose t+1}}{k}=\frac{k-t}{k(1+t)}{k\choose t}.
\end{eqnarray*}
Assume that $R'=R_2$, i.e.,
\begin{eqnarray*}
\frac{k-t}{k(1+t)}{k\choose t}=\frac{K'(1-\frac{t}{k})}{1+K'\frac{t}{k}}\frac{K}{K'}
=\frac{k-t}{k+K't}{k\choose t}.
\end{eqnarray*}
Then we have $K'=k$. So we have
$$\frac{F_2}{F'}=\frac{k}{{k\choose t}}.$$
\begin{remark}
\label{re-group-p2}
From the above discussion, for same parameters $K={k\choose t}$, $\frac{M}{N}=\frac{t}{k}$ and $R={k\choose t+1}/k$, the row number $F'$ of the PDA in Lemma \ref{lem-shanmugam} is ${k\choose t}/k$ times larger than that of $\mathbf{P}_2$ in Theorem \ref{th-permutation of Ali}-(c), where $k$, $t$ are any positive integers with $t<k$.  So the PDA in Lemma \ref{lem-shanmugam} is not Pareto-optimal when any $1<t<k-1$.
\end{remark}


\subsection{Comparasion 2}
\label{sub-comparison 1}
Let us consider the $(q(m+1),(q-1)q^m,(q-1)^2q^{m-1},q^m)$ PDA $\mathbf{P}_3$ in Lemma \ref{lem-yan}. From Theorem \ref{thm1}, the parameters of the scheme generated by $\mathbf{P}_3$ can be obtained as follows.
$$\frac{M_3}{N_3}=1-\frac{1}{q},\ \ K_3=q(m+1),\ \ F_3=q^m(q-1),\ \ R_3=\frac{1}{q-1}$$
From \eqref{eq-p1-parameters}, assume that $\frac{M_3}{N_3}= \frac{M_1}{N_1}$ and $K_3= K_1$, i.e.,
$$1-\frac{1}{q}=1-\frac{t+1}{{k\choose t}}\ \ \ \hbox{and}\ \ \ {k\choose t+1}=q(m+1),$$
for some positive integers $m$, $t$ and $k$. Then we have $$m+1=k-t$$
and
\begin{eqnarray}
\label{eq-Comparison-partition and P1}
\frac{F_1}{F_3}=\frac{{k\choose t}}{q^m(q-1)}=\frac{{k\choose t}}{\Big(\frac{{k\choose t}}{t+1}\Big)^{k-t-1}\Big(\frac{{k\choose t}}{t+1}-1\Big)},\ \ \ \ \ \
 \frac{R_1}{R_3}=\frac{(q-1)k}{{k\choose t}}=\frac{k}{t+1}-\frac{k}{{k\choose t}}.
\end{eqnarray}
Now let us consider the values of ${F_1}/{F_3}$ and ${R_1}/{R_3}$ according to $0<t<k-1$.
\begin{itemize}
\item When $t=k-2$, \eqref{eq-Comparison-partition and P1} can be written in the following way.
$$\frac{F_1}{F_3}=\frac{2 (k-1)}{k-2}\ \ \ \ \ \ \ \ \frac{R_1}{R_3}=\frac{k-2}{k-1}$$
\item When $t\le k-3$, \eqref{eq-Comparison-partition and P1} can be written in the following way.
\begin{eqnarray}
\label{eq-comp-yan-p1}
\frac{F_1}{F_3}\leq \frac{t+1}{k^{k-t-2}(k-1)}\ \ \ \ \ \ \ \ \frac{R_1}{R_3}=\frac{k}{t+1}-\frac{k}{{k\choose t}}
\end{eqnarray}
The first above item is derived by the following fact when $3\leq t\leq k-3$.
\begin{eqnarray*}
\frac{{k\choose t}}{t+1}&=&\frac{k(k-1)\ldots(t+1)}{(k-t)!(t+1)}=k\frac{k-1}{t}\frac{k-2}{t-1}\ldots\frac{t+3}{3\cdot 2}\frac{t+2}{4}\geq k
\end{eqnarray*}
\end{itemize}
\begin{remark}
\label{re-partition-p1}
By comparing $\mathbf{P}_3$ in Lemma \ref{lem-yan} above, the row number of $\mathbf{P}_1$ in Theorem \ref{th-permutation of Ali}-(c) decreases by more than $k^{k-t-2}$ times, but $R$ increases about $\frac{k}{t+1}$ times for some parameters $t$ and $k$.
\end{remark}
\begin{example}
\label{ex-yan-p1}
When $K_1=K_3$ and $\frac{M_3}{N_3}=\frac{M_1}{N_1}$, the following table can be obtained for some small positive integers $k$, $t$, $m$ and $q$.
\begin{eqnarray*}
\begin{array}{|c|c|c|c|c|c|}
\hline
k & t  &  m & q   &F_1/F_3       & R_1/R_3   \\ \hline
6 &  2 &  3 &  5  &  0.03        &  1.6\\ \hline
6 &  3 &  2 &  5  &  0.2         &  1.2\\ \hline
7 &  2 &  4 &  7  &  0.001457    &  2\\ \hline
7 &  4 &  2 &  7  &  0.119048    &  1.2\\ \hline
8 &  3 &  4 & 14  &  0.000112    &  1.85714 \\ \hline
8 &  4 &  3 & 14  &  0.001962    &  1.48571 \\ \hline
9 &  2 &  6 & 12  &  0.000001    &  2.75 \\ \hline
9 &  6 &  2 & 12  &  0.05303     &  1.17857 \\ \hline
10&  2 &  7 & 15  & 0.00000002   &  3.11111\\ \hline
\end{array}
\end{eqnarray*}
\end{example}

\subsection{Comparison 3}

Finally let us consider the $({m\choose l}q^l,q^m(q-1)^l,(q^m-q^{m-l})(q-1)^l,q^m)$PDA $\mathbf{P}_4$ in Lemma \ref{lem-shang} for any positive integers $q\geq 2$, $l$ and $m$ with $l\le m$. From Theorem \ref{thm1}, the parameters of the scheme generated by $\mathbf{P}_4$ can be obtain as follows.
$$\frac{M_4}{N_4}=1-\frac{1}{q^l},\ \ K_4={m\choose l}q^l,\ \ F_4=q^m(q-1)^l,\ \ R_4=\frac{1}{(q-1)^l}.$$

From \eqref{eq-p1-parameters}, assume that $\frac{M_1}{N_1}\leq \frac{M_4}{N_4}\ \hbox{and}\ K_1\geq K_4$, i.e.,
$$1-\frac{t+1}{{k\choose t}}\leq 1-\frac{1}{q^l}\ \hbox{and}\ {k\choose t+1}\geq q^l{m\choose l}$$ for some positive integers $m$, $t$, $l$ and $k$ with $t\le k$ and $l\leq m$. So we have
\begin{eqnarray}
\label{eq-Ge-P1-K,MN}
q^l(t+1)\geq {k\choose t}\ \ \ \ \ \ \hbox{and}\ \ \ \ {k\choose t}\geq q^l{m\choose l}\frac{t+1}{k-t}.
\end{eqnarray}
This implies
$$\left(\frac{{k\choose t}}{t+1}\frac{k-t}{{m\choose l}}\right)^{1/l}\geq q\geq \left(\frac{{k\choose t}}{t+1}\right)^{1/l}\ \ \ \ \ \ \hbox{and}\ \ \ \  \ \ \ \            {m\choose l}\leq k-t.$$
Then we have
\begin{eqnarray}\label{eq-Comparison-Ge and P1-1}
\frac{F_1}{F_4}=\frac{{k\choose t}}{q^m(q-1)^l}\leq\frac{q^l(t+1)}{q^m (q-1)^l}=\frac{t+1}{q^{m-l} (q-1)^l}\end{eqnarray}
\begin{eqnarray}\label{eq-Comparison-Ge and P1-2}
\frac{R_1}{R_4}=\frac{k (q-1)^l}{{k\choose t}}\leq \frac{k (q-1)^l}{q^l{m\choose l}\frac{t+1}{k-t}}=\Big(\frac{q-1}{q}\Big)^l \frac{k}{t+1}\frac{k-t}{{m\choose l}}.
\end{eqnarray}
With the aid of a computer, we can find out some parameters $K$, $t$, $m$ and $l$ listed in the following example satisfying \begin{eqnarray}\label{eq-Comparison-Ge and P1-good}
\frac{K_1}{K_4}\geq 1,  \ \ \ \frac{M_1}{N_1}\leq \frac{M_4}{N_4}, \ \ \ \frac{F_1}{F_4}< 1\ \ \ \hbox{and}\ \ \ \frac{R_1}{R_4}<1.
\end{eqnarray}
\begin{example}
Let $l=m-1$ in Lemma \ref{lem-shang}. The following table can be obtained by \eqref{eq-Comparison-Ge and P1-1} and \eqref{eq-Comparison-Ge and P1-2}.
\label{ex-ge-P1}
\begin{eqnarray*}
\begin{array}{|c|c|c|c|c|c|c|c| }
\hline
k    & t   &m   & q   & K_1/K_4&\frac{M_1}{N_1}/\frac{M_4}{N_4}    &F_1/F_4  & R_1/R_4   \\ \hline
 7   & 3   & 3  & 3   & 1.2963 & 0.9964 & 0.6481 & 0.8000\\ \hline
 25  & 22  & 3  & 10  & 7.6667 & 1      & 0.2556 & 0.8804\\ \hline
 9   & 4   & 4  & 3   & 1.1667 & 0.9973 & 0.7778 & 0.5714\\ \hline
 13  & 7   & 4  & 6   & 1.9861 & 1      & 0.2648 & 0.9470\\ \hline
 14  & 9   & 4  & 6   & 2.3171 & 0.9996 & 0.3090 & 0.8741\\ \hline
 17  & 12  & 4  & 8   & 3.0215 & 0.9999 & 0.2158 & 0.9423\\ \hline
 20  & 15  & 4  & 10  & 3.8760 & 1      & 0.1723 & 0.9404\\ \hline
 11  & 5   & 5  & 3   & 1.1407 & 0.9994 & 0.9506 & 0.3810\\ \hline
 13  & 5   & 5  & 4   & 1.0055 & 0.9992 & 0.4189 & 0.8182\\ \hline
 13  & 6   & 5  & 4   & 1.3406 & 0.9998 & 0.5586 & 0.6136\\ \hline
 13  & 7   & 5  & 4   & 1.3406 & 0.9992 & 0.5586 & 0.6136\\ \hline
 16  & 9   & 5  & 6   & 1.7654 & 0.9999 & 0.2942 & 0.8741\\ \hline
 18  & 10  & 5  & 8   & 2.1366 & 1      & 0.1908 & 0.9877\\ \hline
 19  & 12  & 5  & 8   & 2.4604 & 1      & 0.2197 & 0.9054\\ \hline
 23  & 17  & 5  & 9   & 3.0772 & 1      & 0.2137 & 0.9332\\ \hline
 25  & 19  & 5  & 10  & 3.5420 & 1      & 0.1968 & 0.9262\\ \hline
 15  & 7   & 6  & 4  & 1.0474 & 0.9997 & 0.5237 & 0.5664\\ \hline
 17  & 7   & 6  & 5  & 1.0372 & 0.9999 & 0.3112 & 0.8951\\ \hline
 17  & 8   & 6  & 5  & 1.2965 & 0.9999 & 0.3890 & 0.7161\\ \hline
 17  & 9   & 6  & 5  & 1.2965 & 0.9999 & 0.3890 & 0.7161\\ \hline
 19  & 11  & 6  & 6  & 1.6200 & 1  & 0.3240 & 0.7856\\ \hline
 20  & 9  & 6  & 7  & 1.6656 & 1  & 0.2379 & 0.9259\\ \hline
 20  & 10  & 6  & 7  & 1.8321 & 1  & 0.2617 & 0.8418\\ \hline
 20  & 11  & 6  & 7  & 1.6656 & 1  & 0.2379 & 0.9259\\ \hline
 21  & 13  & 6  & 7  & 2.0179 & 1  & 0.2883 & 0.8025\\ \hline
 21  & 14  & 6  & 6  & 2.4923 & 1  & 0.4985 & 0.5644\\ \hline
 23  & 14  & 6  & 9  & 2.3065 & 1  & 0.1922 & 0.9223\\ \hline
 23  & 15  & 6  & 8  & 2.4939 & 1  & 0.2672 & 0.7884\\ \hline
 23  & 16  & 6  & 7  & 2.4311 & 1  & 0.3473 & 0.7295\\ \hline
 26  & 18  & 6  & 10  & 2.6038 & 1  & 0.1736 & 0.9827\\ \hline
 28  & 21  & 6  & 9  & 3.3420 & 1  & 0.2785 & 0.7749\\ \hline
\end{array}
\end{eqnarray*}
\end{example}

In fact, we can find out several classes of the parameters $k$, $t$, $m$ and $l$ with $0<t<k-1$ and $1\leq l<m$ such that \eqref{eq-Comparison-Ge and P1-good} holds. First we can obtain that
\begin{eqnarray}
\label{eq-compa-11}
q\geq \left(\frac{{k\choose t}}{t+1}\right)^{1/l}\geq \left(\frac{k}{k-t}\right)^{\frac{k-t-1}{l}}
\end{eqnarray}

since
\begin{eqnarray*}
\frac{{k\choose t}}{t+1}=\frac{{k\choose k-t}}{t+1}=\frac{k(k-1)\ldots(t+1)}{(k-t)!(t+1)}=\frac{k}{k-t}\frac{k-1}{k-t-1}\ldots\frac{t+2}{2}\frac{t+1}{t+1}\geq \left(\frac{k}{k-t}\right)^{k-t-1}
\end{eqnarray*}
The last item is derived by the fact that $\frac{k}{k-t}<\frac{k-x}{k-t-x}$ where $1\leq x<k-t$.

Moreover we have $k-t-1\geq l$ since
$$k-t\geq {m\choose l}\geq m\geq l+1.$$
This implies that $q> \frac{k}{k-t}$ from \eqref{eq-compa-11}. Submitting this inequality into \eqref{eq-Comparison-Ge and P1-1} and \eqref{eq-Comparison-Ge and P1-2}, the following results can be obtained.
\begin{eqnarray}\label{eq-Comparison-Ge and P1-m}
\frac{F_1}{F_4}< (t+1)\left(\frac{k-t}{t}\right)^l \left(\frac{k-t}{k}\right)^{m-l}\ \ \ \ \ \
\frac{R_1}{R_4}<  \left(\frac{k-t}{t}\right)^l \frac{k}{t+1}\frac{k-t}{{m\choose l}}.
\end{eqnarray}
Clearly from \eqref{eq-Ge-P1-K,MN} and \eqref{eq-Comparison-Ge and P1-m}, for some parameters $k$, $t$, $m$ and $l$, \eqref{eq-Comparison-Ge and P1-good} always holds. For instance, when $41\leq t=k-6$ and $1<l<m$, from \eqref{eq-Comparison-Ge and P1-m} we have
\begin{eqnarray*}
&&\frac{F_1}{F_4}< (k-5)\left(\frac{6}{k-6}\right)^l \left(\frac{6}{k}\right)^{m-l}=\frac{36(k-5)}{k(k-6)}\left(\frac{6}{k-6}\right)^{l-1} \left(\frac{6}{k}\right)^{m-l-1}\leq\frac{36(k-5)}{k(k-6)},\\
&&\frac{R_1}{R_4}< \left(\frac{6}{k-6}\right)^l \frac{k}{k-5}\frac{6}{{m\choose l}}<\frac{36k}{(k-5)(k-6)}\left(\frac{6}{k-6}\right)^{l-1}\leq \frac{36k}{(k-5)(k-6)}.
\end{eqnarray*}
Clearly when $0<\frac{36(k-5)}{k(k-6)}<\frac{36k}{(k-5)(k-6)}<1$, i.e., $k>\frac{1}{2} \left(47+\sqrt{2089}\right)=46.352$, $F_1/F_4<1$ and $R_1/R_4<1$ always hold.

\begin{remark}
\label{re-partition-p1}
By the above comparison, both the parameters $F$ and $R$ of $\mathbf{P}_1$ in Theorem \ref{th-permutation of Ali}-(c) are smaller than that of $\mathbf{P}_4$ in Lemma \ref{lem-shang} even if
$K\geq K_4$ and $M_1/N_1\leq M_4/N_4$ for some positive integers $k$, $t$, $m$, $l$ and $q$. So the PDA in Lemma \ref{lem-shang} is not Pareto-optimal for some parameters $K$ and $M/N$.
\end{remark}

\section{Conclusion}
\label{conclusion}
In this paper, we considered Pareto-optimal PDAs for some parameters $K$ and $M/N$. Firstly two lower bounds on the value of $R$ were derived. Consequently, MN PDA was pareto optimal, and a tradeoff between the lower bound on $R$ and $F$ was obtained for some parameters. Secondly, unlike the previously know strategies of constructing PDAs, we used a different strategy, i.e., we characterized PDAs by means of a set of $3$ dimensional vectors. From this characterization and the above lower bounds, two new classes of Pareto-optimal PDAs were obtained. Based on these two new PDAs, the schemes with low rate and packet number were obtained. Finally the performance of three previously known schemes were estimated by comparing with these two classes of PDAs.

It is interesting to prove that one class of the other previously known PDAs is Pareto-optimal for some parameters $K$ and $M/N$. And it would be meaningful to characterize the tradeoff between $R$ and $F$ for the other parameters $K$ and $M/N$ and construct the related Pareto-optimal PDAs.


\begin{thebibliography}{1}



\bibitem{White}
White paper: Cisco VNI forecast and methodology, 2015-2020, Tech. Report, 2015.

\bibitem{AA}
K. C. Almeroth and M. H. Ammar, The use of multicast delivery to provide a scalable and interactive video-on-demand
service, {\em IEEE Journal on Selected Areas in Communications}, vol. 14, no. 6, pp. 1110-1122, 1996.

\bibitem{AG}
M. M. Amiri and D. G\"{u}nd\"{u}z, Fundamental limits of caching: Improved delivery rate-cache capacity trade-off,
{\em IEEE Transactions on Communications}, vol. 65, no. 2, pp. 806-815, 2017.
%












\bibitem{GR}
H. Ghasemi and A. Ramamoorthy, Improved lower bounds for coded caching, in Proc. {\em IEEE International Symposium on Information Theory}, Hong Kong, Jun. 2015, pp. 1696-1700.

\bibitem{GMDC}
N. Golrezaei, A. F. Molisch, A. G. Dimakis, and G. Caire, Femtocaching and device-to-device collaboration: A new
architecture for wireless video distribution, {\em  IEEE Communications Magazine}, vol. 51, no. 4, pp. 142-149, 2013.

\bibitem{GGMG}
M. Gregori, J. Gomez-Vilardebo, J. Matamoros, and D. G\"{u}nd\"{u}z, Wireless content caching for small cell and D2D
networks, {\em IEEE Journal on Selected Areas in Communications}, Vol. 34, no. 5, pp. 1222-1234, 2016.


\bibitem{JCM}
M. Ji, G. Caire and A. F. Molisch, Fundamental Limits of Caching in Wireless D2D Networks, {\em IEEE Transactions on Information Theory}, vol. 62, no. 2, pp.849-869, 2016.


\bibitem{JTLC}
M. Ji, A. M. Tulino, J. Llorca, and G. Caire, Order-optimal rate of caching and coded multicasting with random demands, in Proc. {\em 11th International Symposium on Wireless Communications Systems}, Barcelona, Aug. 2014, pp. 922-926.

\bibitem{KNMD}
N. Karamchandani, U. Niesen, M. A. Maddah-Ali, and S. Diggavi, Hierarchical coded caching, in Proc. {\em IEEE International Symposium on Information Theory}, Honolulu, HI, Jun. 2014, pp. 2142-2146.
%

\bibitem{AN}
M. A. Maddah-Ali and U. Niesen, Fundamental limits of caching, {\em IEEE Transactions on Information Theory},
vol. 60, no. 5, pp. 2856-2867, 2014.

\bibitem{MN1}
M. A. Maddah-Ali, Urs Niesen, Decentralized coded caching attains order-optimal memory-rate tradeoff, {\em EEE/ACM Transactions on Networking}, vol. 23, no. 4, pp.1029-1040, 2015.







\bibitem{RAN}
R. Pedarsani, M. A. Maddah-Ali, and U. Niesen, Online coded caching, in Proc. {\em  IEEE International Conference on Communications}, Sydney, Australia, Jun. 2014, pp. 1878-1883.



\bibitem{STC}
A. Sengupta, R. Tandon, and T. C. Clancy, Improved approximation of storage-rate tradeoff for caching via new outer
bounds, in Proc. {\em IEEE International Symposium on Information Theory}, Hong Kong, Jun. 2015, pp. 1691-1695.

\bibitem{SZG}
C. Shangguan, Y. Zhang, G. Ge, Centralized coded caching schemes: A hypergraph theoretical approach, arXiv:arXiv:1608.03989 [cs.IT], Aug. 2016.

\bibitem{SJTLD}
K. Shanmugam, M. Ji, A. M. Tulino, J. Llorca, and A. G. Dimakis, Finite-length analysis of caching-aided coded multicasting, {\em IEEE Transactions on Information Theory}, vol.~62, no.~10, pp. 5524-5537, 2016.

\bibitem{STD}
K. Shanmugam, A. M. Tulino, and A. G. Dimakis, Coded caching with linear subpacketization is possible using Ruzsa-Szem¨¦redi graphs, in Proc. {\em IEEE International Symposium on Information Theory}, Aachen, June 2017, pp. 1237-1241.


\bibitem{TWB}
I. Tamo, Z. Wang, and J. Bruck, Zigzag codes: MDS array codes
with optimal rebuilding,
{\em IEEE Transactions on Information Theory}, vol.~59, no.~3, pp.
  1597--1616, 2013.

\bibitem{T}
 C. Tian and J. Chen, Caching and delivery via interference elimination, in Proc. {\em IEEE International Symposium on Information Theory}, Barcelona, July 2016, pp. 830-834.

 \bibitem{TR}
L. Tang, A. Ramamoorthy, Coded caching with low subpacketization levels, GLOBECOM Workshops 2016: 1-6.


\bibitem{WTP}
K. Wan, D. Tuninetti and P. Piantanida, On the optimality of uncoded cache placement, in Proc. {\em IEEE Information Theory Workshop}, Robinson College, Sept. 2016, pp. 11-14.

\bibitem{WLG}
C. Y. Wang, S. H. Lim, M. Gastpar, A new converse bound for coded caching, in Proc. {\em IEEE Information Theory Workshop}, Robinson College, Oct. 2016.







\bibitem{YCTC}
Q. Yan, M. Cheng, X. Tang and Q. Chen, On the placement delivery array design in centralized coded caching scheme,
 {\em IEEE Transactions on Information Theory}, vol.~63, no.~9, pp. 5821-5833, 2017.

\bibitem{YUTC}
Q.~Yan, U.~Parampalli, X.~Tang, and Q.~Chen, Online coded caching with random
  access, \emph{IEEE Communications Letters}, 2016.

\bibitem{YTC}
Q. Yan,  X. Tang, and Q. Chen, On the gap between decentralized and centralized coded caching schemes, arXiv: 1605.04626 [cs.IT], May  2016.

\bibitem{YTCC}
Q. Yan, X. Tang, Q. Chen, and M. Cheng, Placement delivery array design through strong edge coloring of bipartite graphs, arXiv: 1609.02985 [cs.IT], Sep.  2016.


\bibitem{Xiao2016tree}
L. Zhang, Z. Wang, M. Xiao, G. Wu, and S. Li, Decentralized caching in
  two-layer networks: Algorithms and limits, {\em IEEE 12th International Conference on Wireless and Mobile Computing, Networking and Communications},  New York, Oct. 2016.
\end{thebibliography}
\end{document}